\newcommand{\set}[1]{\left\{#1\right\}}
\renewcommand{\P}{\mbox{\sf P}\xspace}
\newcommand{\NP}{\mbox{\sf NP}\xspace}
\newcommand{\nP}{\mbox{\sf \#P}\xspace}
\newcommand{\CSP}{\mbox{\sf CSP}\xspace}
\newcommand{\nCSP}{\mbox{\sf \#CSP}\xspace}
\newcommand{\Z}{\mbox{\sf \#csp}\xspace}
\newcommand{\FP}{\mbox{\sf FP}\xspace}
\newcommand{\FPnP}{\ensuremath{\FP^{\mbox{\scriptsize\sf \#P}}}\xspace}
\newcommand{\nBIS}{\mbox{\sf \#BIS}\xspace}
\newcommand{\nIS}{\mbox{\sf \#IS}\xspace}
\newcommand{\BISclass}{\mbox{\#RH$\Pi_1$}\xspace}
\def\downsets{\textsc{\#Downsets}\xspace}
\def\IS{\textsc{\#IS}\xspace}
\let\epsilon=\varepsilon
\newcommand{\ORzero}{\mathop{\mathsf{OR}}}
\newcommand{\ORone}{\mathop{\mathsf{Implies}}}
\newcommand{\ORtwo}{\mathop{\mathsf{NAND}}}
\newcommand{\OR}{\ORzero}
\newcommand{\Implies}{\ORone}
\newcommand{\NAND}{\ORtwo}
\newcommand{\nSAT}{\mbox{\sf \#SAT}\xspace}
\newcommand{\IMtwo}{\mbox{$I\!M_2$}\xspace}
\newcommand{\XOR}{\mathop{\mathsf{XOR}}}
\def\redAP{\leq_\mathrm{AP}\xspace}
\def\equalAP{=_\mathrm{AP}\xspace}
\def\alphabet{\Sigma\xspace}
\def\Qhat{\widehat Q}
\def\xor{\oplus\xspace}
\def\prob#1#2#3{\goodbreak\begin{list}{}{\labelwidth\z@ \itemindent-\leftmargin
                        \itemsep\z@  \topsep6\p@\@plus6\p@
                        \let\makelabel\descriptionlabel}
                \item[\it Name.]#1
                \item[\it Instance.]#2
                \item[\it Output.]#3
                \end{list}}
\def\poly{\mathop{\mathrm{poly}}}
\newtheorem{theorem}{Theorem}
\newtheorem{lemma}[theorem]{Lemma}
\newtheorem{corollary}[theorem]{Corollary}
\newtheorem{observation}[theorem]{Observation}
\title{An approximation
trichotomy for Boolean \nCSP
\thanks{This work was partially supported by the EPSRC
grant ``The Complexity of Counting in Constraint
Satisfaction Problems''}}
\author{Martin Dyer \\
School of Computing\\ University of Leeds\\
Leeds LS2~9JT, UK\and
Leslie Ann Goldberg \\
Department of Computer Science,\\
 University of Liverpool,\\
 Liverpool
L69 3BX, UK\and
Mark Jerrum \\
School of Mathematical Sciences,\\
Queen Mary, University of London\\
Mile End Road, London E1 4NS, UK}
\date{\today}
\begin{document}

\maketitle

\begin{abstract}
We give a trichotomy theorem for the
complexity of
approximately counting
the number of satisfying assignments of
a Boolean CSP instance.
Such problems are parameterised by a constraint language
specifying the relations that may be used in constraints.
If every relation in the constraint language
is affine then the number of
satisfying assignments can be exactly counted in polynomial
time. Otherwise, if every relation
in the constraint language
is in the co-clone \IMtwo from Post's
lattice, then the problem of
counting satisfying assignments
is complete with respect to approximation-preserving
reductions for the complexity class \BISclass. This means that
the problem of
approximately counting satisfying assignments of such a CSP instance is
equivalent in complexity to
several other known counting problems, including
the problem of approximately counting the number of independent
sets in a bipartite graph.
For every other fixed constraint language,
the problem is complete for \nP with respect to approximation-preserving
reductions, meaning that there is no \emph{fully polynomial randomised approximation
scheme} for counting satisfying assignments unless NP=RP.

\end{abstract}

\section{Introduction}

This paper gives a
trichotomy theorem for the complexity of
approximately counting
the number of satisfying assignments of
a Boolean CSP instance.
Such problems are parameterised by
a constraint language $\Gamma$
which specifies relations
that may be used in constraints.
In the Boolean case, the relations
are on a domain which has two elements.
Then $\nCSP(\Gamma)$ will denote the problem
of determining the number of (distinct) satisfying
assignments of a CSP instance with constraint
language~$\Gamma$.
Further details are given in Section 1.1 below.

Creignou and Hermann~\cite{ch} have given a dichotomy
theorem for the \emph{exact} counting problem.
They have shown that if every relation in $\Gamma$
is affine, then $\nCSP(\Gamma)$ is in \FP.
Otherwise, it is \nP-complete.
The complexity classes \FP and \nP are the
analogues of \P and \NP for counting problems.
\FP is  the class of functions computable in deterministic
polynomial time. \nP is the class of integer functions that can be
expressed as the number of accepting computations of
a polynomial-time non-deterministic Turing machine.

In this paper we build on previous work
on the complexity of approximate counting
to identify a trichotomy
in the complexity of {approximate counting}
for Boolean \nCSP.

Together with Greenhill~\cite{bis}, we have previously studied
approximation-preserving reductions
(AP-reductions) between
counting problems.
We will give details of AP-reductions in Section~\ref{sec:ap}.
For now it suffices to note that if an AP-reduction
exists from a counting problem~$f$ to a counting problem~$g$
and $g$ has a \emph{Fully Polynomial Randomised Approximation
Scheme} (FPRAS) then $f$ also has an FPRAS.

If an AP-reduction from $f$ to~$g$
exists we write $f\redAP g$, and say that
{\it $f$ is AP-reducible to~$g$}.
If $f\redAP g$ and $g\redAP f$ then we say that
{\it $f$ and $g$ are AP-interreducible}, and write $f\equalAP g$.

We previously identified~\cite{bis} three
natural classes of counting problems that are
interreducible under
AP-reductions. These are (i) those
problems that have an FPRAS,
(ii) those problems that are complete for \nP with
respect to AP-reducibility, and
a third class of intermediate complexity.
Two counting problems played a special role in~\cite{bis}.
\prob{\nSAT.}{A Boolean formula $\varphi$
in conjunctive normal form.}{The number of satisfying
assignments of~$\varphi$.}

\prob{\nBIS.}{A bipartite graph $B$.}{The number of independent sets
in~$B$.}

All problems in~$\nP$ are AP-reducible to \nSAT\
(see \cite[Section 3]{bis}).
Thus \nSAT{} is complete for $\nP$ with respect to
AP-reducibility.
This means that \nSAT{} cannot have an FPRAS unless
$\mathrm{NP}=\mathrm{RP}$.
The same is   true of any problem in $\nP$ to which \nSAT{}
is AP-reducible.

We showed in \cite[Sections 4,\,5]{bis}
that \nBIS\ is AP-interreducible
with many other natural counting problems such as
counting downsets in a partial order.
Moreover, \nBIS\ is complete
for  \BISclass, a logically-defined
subclass of \nP, with respect to AP-reductions.

The main theorem of our current paper
(Theorem~\ref{thm:main}) shows that every problem
$\nCSP(\Gamma)$ falls neatly into one of the
three classes from~\cite{bis}:
If every relation in $\Gamma$
is affine, then
trivially
$\nCSP(\Gamma)$ has an FPRAS since it is in \FP.
Otherwise, if every relation in $\Gamma$
is in a certain set \IMtwo,
then $\nCSP(\Gamma)\equalAP\nBIS$.
Otherwise $\nCSP(\Gamma)\equalAP\nSAT$.
A formal definition of
\IMtwo\
appears in Section~\ref{sec:IMtwo} --- it is the set of relations
which can be expressed as conjunctions
involving only binary implication and unary relations.

It is worth pointing out that, while every problem
$\nCSP(\Gamma)$ falls into one of the three
approximation classes from~\cite{bis},
the three classes may well not provide a partition of all
approximate counting problems in~\nP. For example, the problem of approximately counting
$3$-colourings of a bipartite graph is a problem
that may well lie between \nBIS\ and \nSAT\ in
approximability (see~\cite{bis}).

\subsection{Constraint satisfaction}

\emph{Constraint Satisfaction}, which originated in Artificial
Intelligence, provides a general framework for modelling decision
problems, and has many practical applications. (See, for
example~\cite{RoBeWa06}.) Decisions are modelled by
\emph{variables}, which are subject to \emph{constraints}, modelling
logical and resource restrictions. The paradigm is sufficiently
broad that many interesting problems can be modelled, from
satisfiability problems to scheduling problems and graph-theory
problems. Understanding the complexity of constraint satisfaction
problems has become a major and active area within computational
complexity~\cite{cks, hnbook}.

A Constraint Satisfaction Problem (CSP)
typically has a finite \emph{domain}, which we
denote by $\set{0,\ldots,q-1}$ for a positive integer~$q$.
In this paper we are
interested in the \emph{Boolean} case
$q=2$.
A \emph{constraint
language\/} $\Gamma$
with domain $\{0,\ldots,q-1\}$
is a set of relations on~$\{0,\dots,q-1\}$. For example,
take
$q=2$.
The
relation $R=\{(0,0,1)$, $(0,1,0)$, $(1,0,0)$, $(1,1,1)\}$ is a 3-ary
relation on the domain $\{0,1\}$, with four tuples.

Once we have fixed a constraint language $\Gamma$,
an \emph{instance\/} of the CSP is a set of \emph{variables\/}
$V=\{v_1,\ldots,v_n\}$ and a set of \emph{constraints}. Each
constraint has a \emph{scope,} which is a tuple of variables (for
example, $(v_4, v_5, v_1)$) and a relation from~$\Gamma$ of the same
arity, which constrains the variables in the scope. An
\emph{assignment} $\sigma$ is a function from~$V$
to~$\{0,\ldots,q-1\}$. The
assignment~$\sigma$ is \emph{satisfying} if the scope of every
constraint is mapped to a tuple that is in the corresponding
relation.
In our example above, an assignment~$\sigma$ satisfies the constraint with scope
$(v_4,v_5,v_1)$ and relation~$R$, written $R(v_{4},v_{5},v_{1})$,
if and only if it maps an odd number of the variables in
$\{v_1,v_4,v_5\}$ to the value~$1$.
Given an instance~$I$ of a CSP with
constraint language $\Gamma$, the \emph{decision problem}
$\CSP(\Gamma$) asks us to determine whether any assignment
satisfies~$I$. The \emph{counting problem} $\nCSP(\Gamma$) asks us to
determine the \emph{number} of (distinct) satisfying
assignments of~$I$, which we will denote by $\Z(I)$.

Varying the constraint language~$\Gamma$ defines the classes \CSP and
\nCSP of decision and counting problems. These contain problems of
different computational complexities.
For example,
consider the binary relations defined by
$\ORzero=\{(0,1),(1,0),(1,1)\}$,
$\ORone=\{(0,0),(0,1),(1,1)\}$, and
$\ORtwo=\{(0,0),(0,1),(1,0)\}$.
If
$\Gamma=\{\ORzero,\ORone,\ORtwo\}$
then
$\CSP(\Gamma)$ is the classical 2-Satisfiability problem, which is in
\P. On the other hand, there is a similar constraint
language~$\Gamma'$ with four relations of arity~3 such that
3-Satisfiability (which is \NP-complete) can be represented in
$\CSP(\Gamma')$.  It may happen, as here, that the counting problem
is harder than the decision problem:  $\nCSP(\Gamma)$
contains the problem of counting independent sets in graph,
and is thus \nP-complete.

Any  decision problem $\CSP(\Gamma)$ is in \NP, but not every problem
in \NP can be represented as a CSP. For example, the question ``Is
$G$ Hamiltonian?'' cannot be expressed as a CSP, because the
property of being Hamiltonian cannot be captured by relations of
bounded size. This limitation of the class \CSP has an important
advantage. If $\P \neq \NP$, then there are problems
which are neither in \P nor \NP-complete~\cite{L75}. But, for
well-behaved smaller classes of decision problems, the situation can
be simpler. We may have a \emph{dichotomy theorem}, partitioning all
problems in the class into those which are in \P and those which are
\NP-complete. There are no ``leftover'' problems of intermediate
complexity. It has been conjectured that there is a dichotomy
theorem for \CSP.  The conjecture is  that $\CSP(\Gamma)$ is in \P
for some constraint languages $\Gamma$, and $\CSP(\Gamma)$ is \NP-complete
for all other constraint languages $\Gamma$. This conjecture appeared in a
seminal paper of Feder and Vardi~\cite{fv}, but has not yet been proved.
A similar
dichotomy, between \FP and \nP-complete, is conjectured for
\#CSP~\cite{BD}.
Recently, Bulatov~\cite{Bul} has announced a positive
resolution of this conjecture.

There have been many
important results for subclasses of \CSP and \nCSP. We mention the
most relevant to our paper here. The first decision dichotomy was that
of Schaefer~\cite{schaefer}, for the Boolean domain $\{0,1\}$.
Schaefer's result is as follows.
\begin{theorem}[Schaefer~\cite{schaefer}]
\label{thm:schaefer} Let $\Gamma$ be a constraint language with domain
$\{0,1\}$. The problem $\CSP(\Gamma)$ is
in \P if\/ $\Gamma$ satisfies one of the conditions below. Otherwise,
$\CSP(\Gamma)$ is \NP-complete.
\begin{enumerate}[topsep=5pt]
\item $\Gamma$ is $0$-valid or $1$-valid.
\item $\Gamma$ is weakly positive or weakly negative.
\item $\Gamma$ is affine.
\item $\Gamma$ is bijunctive.
\end{enumerate}
\end{theorem}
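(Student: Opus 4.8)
The plan is to split the statement into a tractability half (each of the four conditions puts $\CSP(\Gamma)$ into \P) and a hardness half (every other $\Gamma$ makes $\CSP(\Gamma)$ \NP-complete); since $\CSP(\Gamma)$ is clearly in \NP, the second half requires only a hardness proof. For the tractability half: if $\Gamma$ is $0$-valid (respectively $1$-valid) then the assignment mapping every variable to $0$ (respectively $1$) satisfies every instance, so the answer is always ``yes''. If $\Gamma$ is weakly positive or weakly negative, then each relation of $\Gamma$, whose arity is bounded by a constant depending only on $\Gamma$, is equivalent to a CNF each of whose clauses has at most one literal of a fixed sign; expanding every constraint of a given instance into this fixed CNF enlarges the instance by only a constant factor and yields an equivalent Horn or dual-Horn satisfiability instance, which can be solved in polynomial time by unit propagation. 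If $\Gamma$ is affine, replace each constraint by its defining system of linear equations over $\mathrm{GF}(2)$ and test the combined system for solvability using Gaussian elimination. If $\Gamma$ is bijunctive, replace each constraint by its defining 2-CNF and solve the resulting instance of 2-Satisfiability. So in all four cases $\CSP(\Gamma)\in\P$.

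For the hardness half I would use the algebraic approach. Call a relation $R$ \emph{pp-definable} from $\Gamma$ if it can be written as a conjunction of atomic $\Gamma$-formulas and equalities in which some variables are existentially quantified; then $\CSP(\Gamma\cup\{R\})$ reduces to $\CSP(\Gamma)$ in polynomial time, by replacing each constraint that uses $R$ with a fresh copy of the defining conjunction. By the classical Galois connection between relations and operations, for a finite $\Gamma$ the relations pp-definable from $\Gamma$ are precisely those invariant under every polymorphism of $\Gamma$. Crucially, the four tractability conditions correspond exactly to the presence of a single operation among these polymorphisms: $\Gamma$ is $0$-valid, $1$-valid, weakly positive, weakly negative, affine, or bijunctive precisely when the clone $\mathrm{Pol}(\Gamma)$ contains, respectively, the constant operation~$0$, the constant operation~$1$, binary disjunction, binary conjunction, the ternary operation $x\oplus y\oplus z$, or the ternary majority operation. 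It therefore suffices to show that $\CSP(\Gamma)$ is \NP-hard whenever $\mathrm{Pol}(\Gamma)$ contains none of these six operations.

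Here Post's classification of the clones of Boolean operations supplies the decisive input: the only clones avoiding all six of these operations are the clone of all projections and the clone generated by negation. In the first case every Boolean relation is pp-definable from $\Gamma$, in particular the eight ternary ``clause'' relations $\{0,1\}^3\setminus\{t\}$, one for each $t\in\{0,1\}^3$; translating a 3-CNF formula clause by clause then shows that $\CSP(\Gamma)$ is at least as hard as 3-Satisfiability, and hence \NP-hard. In the second case the relations pp-definable from $\Gamma$ are exactly those invariant under flipping every coordinate, and these include every relation expressing a not-all-equal constraint on three literals, so translating clause by clause shows that $\CSP(\Gamma)$ is at least as hard as not-all-equal 3-Satisfiability, which is also \NP-complete. (Every pp-definition involved is a fixed finite object, so exhibiting it costs nothing asymptotically.)

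The main obstacle is precisely this structural input from Post's lattice. In Schaefer's original, self-contained treatment it is replaced by a somewhat intricate case analysis: starting from a $\Gamma$ that violates all four conditions, one argues directly --- using substitution of the constants $0$ and $1$ for variables, identification of variables, conjunction, and projection of variables --- that $\Gamma$ can ``implement'' some relation whose own CSP is already \NP-complete. Selecting the right small stock of \NP-complete seed problems (3-Satisfiability and not-all-equal 3-Satisfiability suffice) and then verifying these implementations in each of the cases is where the real work lies; the tractability algorithms and the reduction bookkeeping are routine by comparison.
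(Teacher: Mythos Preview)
The paper does not prove this theorem at all: Schaefer's dichotomy is quoted as background, attributed to~\cite{schaefer}, and the authors explicitly say they will not even define the conditions, referring the reader to Schaefer's paper or to~\cite{cks}. So there is no ``paper's own proof'' to compare against.

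Your outline is nonetheless a correct sketch of a proof of Schaefer's theorem. The tractability half is standard and accurate. For the hardness half you take the modern algebraic route (polymorphisms, the Galois connection, and Post's lattice) rather than Schaefer's original direct implementation argument, and you correctly flag that the heavy lifting is the structural fact that any Boolean clone omitting the six listed operations is contained in the clone generated by negation. One small caveat: when $\mathrm{Pol}(\Gamma)$ is the clone generated by negation you say the pp-definable relations ``include every relation expressing a not-all-equal constraint on three literals''; only the complement-closed such relations are invariant under negation, but that is enough, since NAE-3SAT restricted to clauses with an even (or odd) number of negations is still \NP-complete, or one can simply use the unnegated NAE relation and reduce from positive NAE-3SAT. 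With that clarification the argument goes through.
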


We will not give detailed definitions of the conditions in
Theorem~\ref{thm:schaefer}, but the interested reader is referred to
the paper~\cite{schaefer} or to Theorem~6.2 of the
textbook~\cite{cks}. An interesting feature is that the conditions
in~\cite[Theorem~6.2]{cks} are all checkable. That is, there is an algorithm to
determine whether $\CSP(\Gamma$) is in \P or \NP-complete, given a constraint
language~$\Gamma$ with domain~$\{0,1\}$.
We say in this case that the dichotomy is \emph{effective}.

A Boolean relation $R$ is
said to be \emph{affine} if the set of
tuples~$x\in R$ is the set of
solutions to a system of linear
equations over GF($2$).
Creignou and Hermann~\cite{ch} adapted Schaefer's decision
dichotomy to obtain a counting dichotomy for the Boolean domain.
Their result is as follows.
\begin{theorem}[Creignou and Hermann~\cite{ch}]
\label{thm:CH}
Let $\Gamma$ be a constraint language with domain
$\{0,1\}$.
The problem $\nCSP(\Gamma)$ is in\/ \FP\/
if\/ every
relation in $\Gamma$ is affine.
Otherwise, $\nCSP(\Gamma)$ is \nP-complete.
\end{theorem}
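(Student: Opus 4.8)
The plan is to prove the two directions separately, the first being routine and the second carrying all of the content.

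If every relation in $\Gamma$ is affine, then each $R\in\Gamma$ is, by definition, the solution set over $\mathrm{GF}(2)$ of some fixed linear system, and since $\Gamma$ is finite these systems can be computed once and for all. Given an instance $I$ on variables $v_1,\dots,v_n$, replace each constraint by the corresponding linear equations on its scope and take their union; the result is a single linear system $A\mathbf{x}=\mathbf{b}$ over $\mathrm{GF}(2)$ whose solution set is exactly the set of satisfying assignments of $I$. Gaussian elimination decides consistency and computes $\mathrm{rank}(A)$ in polynomial time, and then $\Z(I)$ equals $0$ if the system is inconsistent and $2^{n-\mathrm{rank}(A)}$ otherwise. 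Hence $\nCSP(\Gamma)\in\FP$.

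For the converse, suppose some $R\in\Gamma$ is not affine; the goal is to show $\nCSP(\Gamma)$ is $\nP$-complete. Membership in $\nP$ is immediate. For hardness the plan is to reduce from $\nIS$, counting independent sets in a graph, which is known to be $\nP$-complete. Writing $\NAND=\{(0,0),(0,1),(1,0)\}$, the independent sets of $G=(V,E)$ are precisely the satisfying assignments of the instance of $\nCSP(\{\NAND\})$ with variable set $V$ and one constraint $\NAND(u,w)$ per edge $\{u,w\}\in E$; by the symmetry $0\leftrightarrow1$ the same holds with $\OR=\{(0,1),(1,0),(1,1)\}$. So it suffices to give a \emph{parsimonious} (count-preserving) reduction from $\nCSP(\{\NAND\})$ or $\nCSP(\{\OR\})$ to $\nCSP(\Gamma)$; equivalently, to build $\NAND$ or $\OR$ out of $\Gamma$ using only operations that preserve the number of satisfying assignments of an instance: taking a relation of $\Gamma$, permuting or identifying coordinates, conjoining constraints, pinning a variable to $0$ or $1$ (legitimate when $\Gamma$ can express the singleton $\{(0)\}$ or $\{(1)\}$), and projecting out a variable whose value is forced by the others. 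The crux is therefore a structural fact: a Boolean relation is affine if and only if it is invariant under the ternary \emph{minority} operation $m(x,y,z)=x\oplus y\oplus z$ (which generates the affine clone in Post's lattice). Since $R$ is not affine, there are tuples $\mathbf{a},\mathbf{b},\mathbf{c}\in R$ with $\mathbf{d}:=m(\mathbf{a},\mathbf{b},\mathbf{c})\notin R$, where $m$ is applied coordinatewise. I would exploit this witness by partitioning the coordinates of $R$ according to the column pattern $(\mathbf{a}_i,\mathbf{b}_i,\mathbf{c}_i)$ — only eight patterns occur, and each forces $\mathbf{d}_i$ — then identifying coordinates with equal pattern and pinning away the ones whose pattern is constant. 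What remains is a relation of bounded arity (at most six) witnessing non-affineness, and a finite inspection of the few non-affine Boolean relations reachable this way shows that each, together with pinning, parsimoniously yields $\NAND$ or $\OR$; for instance, the ternary not-all-equal relation satisfies $\mathrm{NAE}(x,y,0)=\OR(x,y)$. Composing these steps gives the reduction.

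The main obstacle is that this argument assumes $\Gamma$ can pin variables both to $0$ and to $1$, which fails exactly when every relation of $\Gamma$ is $0$-valid, or every relation is $1$-valid, or $\Gamma$ is closed under complementation — and each of these families contains non-affine languages (not-all-equal again). In those cases I would argue $\nP$-hardness separately, by a weakly parsimonious reduction: either adjoin a single fresh variable acting as a global constant and divide out the resulting polynomial-time-computable factor, or express the desired count as a polynomial-size sum of counts of pinned sub-instances and recover it using the linear structure forced by non-affineness. This pin-free case analysis is where the real difficulty lies; everything else follows once the characterisation of affineness via the minority operation is in hand.
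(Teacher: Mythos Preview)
The paper does not prove Theorem~\ref{thm:CH}; it is quoted as a known result of Creignou and Hermann and used only as background. There is therefore no proof here to compare against directly, although the implementations of Creignou, Khanna and Sudan reproduced in the Appendix (Lemmas~\ref{implementone}--\ref{lemsim}) are essentially the engine your hardness sketch is reaching for, and your \FP\ direction via Gaussian elimination matches what the paper says verbatim in the proof of Theorem~\ref{thm:main}.

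Your hardness outline has a concrete gap. You assert that after identifying coordinates by the column pattern of the witnesses $a,b,c$ and pinning away the constant columns, the remaining small relation ``parsimoniously yields $\NAND$ or $\OR$'', and you base everything on a reduction from $\nIS$. That claim is false in general: the actual case analysis (Lemma~\ref{lemsim}, via Lemma~\ref{cases}) shows the outcome may instead be $\Implies$, and $\{\Implies,\delta_0,\delta_1\}$ cannot implement either $\NAND$ or $\OR$ --- its closure under conjunction and projection is exactly \IMtwo. So in that branch your reduction from $\nIS$ simply does not go through. The repair is easy once noticed, since $\nCSP(\{\Implies\})$ is \downsets, which is \nP-complete in its own right; one just switches the source problem. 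But your ``finite inspection'' on up to six free variables neither anticipates this branch nor constitutes a proof (with six unknowns the space of candidate relations is astronomically large). The route taken in the Appendix is sharper: it first cuts down to at most three free variables --- using two witnesses $s,s'$ with $s\oplus s'\notin R$ from Lemma~\ref{affine:facts}(iii) in the $0$-valid case, and short direct arguments otherwise --- and only then performs the manageable case split of Lemma~\ref{cases}, explicitly allowing $\Implies$ as an outcome. Your handling of the pin-free situations is the right idea at the level of slogans (the ``adjoin a global constant and divide out'' trick is exactly how Lemma~\ref{otherpin} deals with complement-closed $\Gamma$), but the all-$0$-valid and all-$1$-valid cases still require the full implementation machinery with one-sided pinning, not a separate ad~hoc argument.
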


Creignou and Hermann's result is an important starting point
for our work, and we will discuss it further
below. Note that
there is an algorithm for determining
whether a relation is affine, so the dichotomy is effective.

We have recently~\cite{bcsp} extended
Creignou and Hermann's dichotomy
to the domain of \emph{weighted} Boolean $\nCSP$
giving an effective dichotomy between \FP and \FPnP\
for the problem of computing the partition function
of a weighted Boolean CSP instance.

\subsection{The complexity of approximate counting}
\label{sec:ap}

We now recall the necessary background from~\cite{bis}.
A \emph{randomised approximation scheme\/} is an algorithm for
approximately computing the value of a
function~$f:\Sigma^*\rightarrow
\mathbb{N}$. The
approximation scheme has a parameter~$\varepsilon>0$
which specifies
the error tolerance.
A \emph{randomised approximation scheme\/} for~$f$ is a
randomised algorithm that takes as input an instance $ x\in
\alphabet^{\ast }$ (e.g., an encoding of a CSP
instance) and an error
tolerance $\varepsilon >0$, and outputs an
integer $z$
(a random variable on the ``coin tosses'' made by the algorithm)
such that, for every instance~$x$,
\begin{equation}
\label{eq:3:FPRASerrorprob}
\Pr \big[e^{-\epsilon} f(x)\leq z \leq
e^\epsilon f(x)\big]\geq \frac{3}{4}\, .
\end{equation}
The randomised approximation scheme is said to be a
\emph{fully polynomial randomised approximation scheme},
or \emph{FPRAS},
if it runs in time bounded by a polynomial
in $ |x| $ and $ \epsilon^{-1} $.
(See Mitzenmacher and Upfal~\cite[Definition 10.2]{MU05}.)
Note that the quantity $3/4$ in
Equation~(\ref{eq:3:FPRASerrorprob})
could be changed to any value in the open
interval $(\frac12,1)$ without changing the set of problems
that have randomised
approximation schemes~\cite[Lemma~6.1]{JVV86}.

Suppose that $f$ and $g$ are functions from
$\alphabet^{\ast }$ to~$\mathbb{N}$. An
``approximation-preserving
reduction''
(AP-reduction)
from~$f$ to~$g$ gives a way to turn an FPRAS for~$g$
into an FPRAS for~$f$. An AP-reduction
from $f$ to~$g$ is a randomised algorithm~$\mathcal{A}$ for
computing~$f$ using an oracle for~$g$\footnote{The
reader who is not familiar with oracle Turing machines
can just think of this as an imaginary (unwritten)
subroutine for computing~$g$.}.
The algorithm~$\mathcal{A}$ takes
as input a pair $(x,\varepsilon)\in\alphabet^*\times(0,1)$, and
satisfies the following three conditions: (i)~every oracle call made
by~$\mathcal{A}$ is of the form $(w,\delta)$, where
$w\in\alphabet^*$ is an instance of~$g$, and $0<\delta<1$ is an
error bound satisfying $\delta^{-1}\leq\poly(|x|,
\varepsilon^{-1})$; (ii) the algorithm~$\mathcal{A}$ meets the
specification for being a randomised approximation scheme for~$f$
(as described above) whenever the oracle meets the specification for
being a randomised approximation scheme for~$g$; and (iii)~the
run-time of~$\mathcal{A}$ is polynomial in $|x|$ and
$\varepsilon^{-1}$.  In formulating a definition of 
approximation-presering reduction, a number of choices
must be faced.  The key requirement is that 
the class of functions computable by an FPRAS should be closed 
under AP-reducibility.
Informally, we have gone for the most liberal notion
of reduction meeting this requirement.

\subsection{Notation for relations}

Define the unary relations
$\delta_0=\{(0)\}$ and
$\delta_1=\{(1)\}$.
Recall the binary relation
$\ORone=\{(0,0),(0,1),(1,1)\}$.

For convenience, according to context,
we view a $k$-ary relation $R$ either as a set of $k$-tuples
or as a $k$-ary predicate.  Thus the notations 
$R(x_1,\ldots,x_k)=1$ (or just $R(x_1,\ldots,x_k)$) and $(x_1,\ldots,x_k)\in R$
are equivalent.
For example,
${\delta_0}(x) = \overline{x}$,
${\delta_1}(x)=x$ and ${\ORone}(x,y) = \overline{x}\vee y$.

\subsection{The set of relations \IMtwo}
\label{sec:IMtwo}

An $n$-ary relation $R$ is in \IMtwo\
if and only if $R(x_1,\ldots,x_n)$ is logically equivalent
to a conjunction of predicates of the form
$\delta_0(x_i)$, $\delta_1(x_i)$
and $\Implies(x_i,x_j)$.

As we will discuss below,
Creignou, Kolaitis, and Zanuttini
\cite{ckz} have shown that \IMtwo\
is a co-clone
in Post's lattice (see \cite{brsv}).

\subsection{Our result}

We can now state our main theorem.

\begin{theorem}
Let $\Gamma$ be a   constraint language with
domain $\{0,1\}$.
If every relation in $\Gamma$ is affine
then $\nCSP(\Gamma)$ is in \FP.
Otherwise if every relation in $\Gamma$ is
in \IMtwo\ then $\nCSP(\Gamma)\equalAP\nBIS$.
Otherwise $\nCSP(\Gamma)\equalAP\nSAT$.
\label{thm:main}
\end{theorem}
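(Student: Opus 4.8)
The plan is to handle the three cases by \emph{faithful} approximation-preserving reductions, that is, reductions whose gadgets preserve the number of satisfying assignments exactly. The operations we allow in building gadgets are: taking conjunctions of constraints, identifying variables, pinning variables to the constants $0$ or $1$, and introducing auxiliary variables whose value is functionally determined by the others (so that every satisfying assignment of the gadget extends uniquely to the auxiliary variables). To make pinning available we first prove a \emph{pinning lemma} asserting $\nCSP(\Gamma\cup\{\delta_0,\delta_1\})\equalAP\nCSP(\Gamma)$ for every $\Gamma$; thereafter we may assume $\delta_0,\delta_1\in\Gamma$ whenever we prove hardness. The affine case then needs nothing new: by Theorem~\ref{thm:CH}, $\nCSP(\Gamma)\in\FP$, so it has an FPRAS and is AP-reducible to both \nBIS\ and \nSAT.

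Next, the \IMtwo\ case. For the upper bound $\nCSP(\Gamma)\redAP\nBIS$ we rewrite each constraint as the equivalent conjunction of $\delta_0$-, $\delta_1$- and $\Implies$-atoms and propagate pinned variables, obtaining an equivalent instance of $\nCSP(\{\Implies\})$; counting its solutions is exactly counting downsets of a preorder, i.e.\ the problem \downsets, which is AP-interreducible with \nBIS~\cite{bis}. For the lower bound $\nBIS\redAP\nCSP(\Gamma)$ we fix a relation $R\in\Gamma$ that is not affine. Being in \IMtwo, $R$ is closed under coordinatewise $\min$ and $\max$, so it has least and greatest tuples, it induces a preorder on its coordinates, and (immediately from the definition of \IMtwo) it equals the conjunction of all $\delta_0$-, $\delta_1$- and $\Implies$-atoms it satisfies. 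Failure of affineness forces this preorder to have a strict relation between two coordinates that are not pinned to constants; choosing such a pair to be covering in the quotient poset and pinning every other free coordinate of $R$ to a constant that is consistent with the preorder, the binary relation induced by $R$ on the two surviving coordinates is exactly $\Implies$, and each of its tuples has a unique extension. Hence $\nCSP(\{\delta_0,\delta_1,\Implies\})\redAP\nCSP(\Gamma)$, and therefore $\nBIS\equalAP\downsets\redAP\nCSP(\Gamma)$.

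In the remaining case, $\Gamma$ contains a relation $R_1$ that is not affine and a relation $R_2$ that is not in \IMtwo, and we must show $\nSAT\redAP\nCSP(\Gamma)$; the reverse reduction holds automatically because \nSAT\ is \nP-complete under AP-reductions~\cite{bis}. We reduce from $\nIS=\nCSP(\{\NAND\})$ (counting independent sets, equivalently vertex covers), which is AP-interreducible with \nSAT~\cite{bis}, so it is enough to implement $\NAND$ (or, by complementation, $\OR$). Using $R_2$: failure of closure under $\min$ supplies $a,b\in R_2$ with $a\wedge b\notin R_2$; pinning the coordinates on which $a$ and $b$ agree and identifying the coordinates within each of the two disagreement patterns yields a binary relation that contains $(1,0)$ and $(0,1)$ but not $(0,0)$, hence $\XOR$ or $\OR$ (the case of closure under $\max$ is symmetric and gives $\XOR$ or $\NAND$). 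If we get $\OR$ or $\NAND$ we are done. If we get $\XOR$, then $\XOR$ together with $\delta_0,\delta_1$ lets us complement and pin coordinates at will, and we must extract $\NAND$ or $\OR$ from $R_1$ with this extra power: for example, if some coordinate-complement of $R_1$ lies in \IMtwo\ we obtain $\Implies$ as above and then realise $\NAND(x,z)$ by the faithful gadget $\Implies(x,y)\wedge\XOR(y,z)$ in which $y$ is a determined auxiliary variable; otherwise $R_1$ itself is non-affine and outside \IMtwo, and a structural analysis of such relations --- using Post's lattice, the quantifier-free co-clone bases of Creignou, Kolaitis and Zanuttini~\cite{ckz}, and gadgets such as the one realising $\NAND(x,y)$ by a single $1$-in-$3$ constraint $R_{1/3}(x,y,z)$ with $z$ determined --- again produces $\NAND$ or $\OR$.

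We expect this final step to be the main obstacle. In the exact-counting setting of Creignou and Hermann one may freely use unrestricted existential quantification and polynomial interpolation to wring \nP-hardness from an arbitrary non-affine relation, but neither tool survives the passage to approximation: existential quantification changes the number of satisfying assignments, and interpolation is not approximation-preserving. All reductions must therefore be carried out by count-exact gadgets, and the hard part is to show that \emph{every} relation that is simultaneously non-affine and outside \IMtwo\ admits such a gadget for $\NAND$ or $\OR$ once $\delta_0$, $\delta_1$ and $\XOR$ are in hand --- in effect a restriction of the Galois correspondence for counting CSP to ``faithful'' primitive-positive definitions, which is exactly the delicate and case-laden core of the argument.
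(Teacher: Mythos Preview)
Your architecture matches the paper's, and your \IMtwo\ lower bound --- reading off $\Implies$ from a covering pair in the quotient poset of a non-affine \IMtwo\ relation --- is a clean direct argument that the paper does \emph{not} give; the paper instead routes this case, too, through the Creignou--Khanna--Sudan implementation lemma (\cite[Lemma~5.30]{cks}, stated here as Lemma~\ref{lemsim}). One caveat: the blanket pinning assertion $\nCSP(\Gamma\cup\{\delta_0,\delta_1\})\equalAP\nCSP(\Gamma)$ is not as immediate as you suggest --- for complement-closed $\Gamma$ one can fold away a \emph{single} constant by a factor-of-two trick, but getting both at once already needs an implemented inequality relation, which is not available a priori. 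The paper pins only one constant initially (Lemma~\ref{otherpin}) and recovers the other later, once $\Implies$ has been produced, via the coordinate imbalance of $\Implies$ (Lemma~\ref{powerpin}).

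The substantive gap is in your \nSAT\ case, and the ``main obstacle'' you anticipate is real and is created by your order of attack. You process $R_2\notin\IMtwo$ first and land on $\OR$, $\NAND$, or $\XOR$; in the $\XOR$ branch you then must extract $\OR$/$\NAND$/$\Implies$ from the non-affine $R_1$, and your proposed case split (some coordinate-complement of $R_1$ lies in \IMtwo\ versus not) still leaves an unresolved ``structural analysis'' in the second subcase. The paper reverses the order: it applies Lemma~\ref{lemsim} to $R_1$ and a single constant \emph{first}, and the whole point of that lemma is that a non-affine relation together with one constant faithfully implements one of $\OR$, $\Implies$, $\NAND$ --- crucially never just $\XOR$. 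If the outcome is $\OR$ or $\NAND$ we are done; if it is $\Implies$, one pins the second constant and then $R_2$ yields $\OR$, $\NAND$, or $\XOR$ as in your argument, with the $\XOR$ branch closing immediately via $\NAND(x,z)=\Implies(x,y)\wedge\XOR(y,z)$ because $\Implies$ is already in hand. In short, the ``delicate and case-laden core'' you foresee \emph{is} Lemma~\ref{lemsim}; invoking it on $R_1$ before touching $R_2$ dissolves the obstacle instead of deferring it.
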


The main ingredients in the proof are:
(1) the AP-reduction technology of~\cite{bis}, which allows us to
effectively ``pin'' certain CSP variables in hardness proofs
(see Section~\ref{sec:pinning}); (2) the ``implementations''
of Creignou, Khanna and Sudan~\cite{cks}, which show how to
construct the key relations $\OR$, $\Implies$, and $\NAND$ from a non-affine relation
and $\delta_0$ or $\delta_1$
(see
Section~\ref{sec:implementation});
(3) the complexity class \BISclass\ from~\cite{bis}, consisting of
those problems which are AP-interreducible with \nBIS; and
(4) the co-clone \IMtwo\ in Post's lattice (see Section~\ref{sec:PolyIMtwo}),
since the complexity of $\nCSP(\Gamma)$ for $\Gamma \subseteq \IMtwo$
turns out to be closely
connected to the complexity of \nBIS.

\section{The pieces of the proof}

\subsection{Types of relations}

A relation $R$ is \emph{0-valid} if
the all-zero tuple is in~$R$.
Similarly, $R$ is \emph{1-valid} if
the all-ones tuple is in~$R$.
Following~\cite{cks}, we say that
a $k$-ary relation $R$ is
\emph{complement-closed} (C-closed in~\cite{cks})
if
$$(x_1,\dots,x_k)\in R \Leftrightarrow
(x_1 \oplus 1,\ldots,x_k\oplus 1)\in R,$$
where $\oplus$ is the exclusive or operator.

We say that $\Gamma$ is 0-valid if every $R\in \Gamma$
is 0-valid and we define what it means for $\Gamma$ to
be 1-valid or complement-closed similarly.

\subsection{Some preliminary complexity
results}

We start by observing that every problem
$\nCSP(\Gamma)$ is AP-reducible to \nSAT
\begin{observation}
\label{obs:SATeasy}
Let $\Gamma$ be a  constraint language with
domain $\{0,1\}$.
Then $\nCSP(\Gamma)\redAP\nSAT$.
\end{observation}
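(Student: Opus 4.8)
The plan is to show directly that, given a constraint language $\Gamma$ over $\{0,1\}$, any instance $I$ of $\nCSP(\Gamma)$ can be rewritten as an instance $\varphi$ of \nSAT\ with exactly the same number of satisfying assignments, so that $\Z(I)$ and the number of satisfying assignments of $\varphi$ coincide. This makes the reduction a parsimonious polynomial-time many-one reduction, which is trivially an AP-reduction (the oracle for \nSAT\ is called once, with the same error tolerance $\varepsilon$ that was handed to us, and its output is returned unchanged).

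First I would fix, once and for all, a finite set of CNF gadgets---one for each relation $R$ that appears in $\Gamma$. Since $\Gamma$ is a fixed constraint language, it contains only finitely many relations, each of bounded arity~$k$, so each $R$ has at most $2^k$ tuples and can be written as a CNF formula $\psi_R(x_1,\ldots,x_k)$ whose satisfying assignments are exactly the tuples of~$R$: for every $k$-tuple $t \notin R$, include the clause that is falsified precisely by $t$ (the disjunction of $x_i$ for the coordinates where $t_i=0$ and $\overline{x_i}$ where $t_i=1$). The size of $\psi_R$ depends only on $\Gamma$, not on the instance, so this is a constant-size lookup table.

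Given an instance $I$ of $\nCSP(\Gamma)$ with variable set $V = \{v_1,\ldots,v_n\}$ and constraints $C_1,\ldots,C_m$, I would then build $\varphi$ on the same variable set $V$ as the conjunction $\bigwedge_{j=1}^m \psi_{R_j}(\text{scope of } C_j)$, where $R_j$ is the relation of constraint $C_j$ and the variables of $\psi_{R_j}$ are substituted by the variables named in the scope of $C_j$ in the appropriate order. An assignment $\sigma: V \to \{0,1\}$ satisfies $\varphi$ iff it satisfies every $\psi_{R_j}$ applied to the corresponding scope, iff it maps each scope into the corresponding relation, iff $\sigma$ is a satisfying assignment of~$I$. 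Hence $\varphi$ has exactly $\Z(I)$ satisfying assignments, and since each $\psi_{R_j}$ has constant size, $\varphi$ has size $O(m)$ and is constructible in polynomial time.

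There is essentially no obstacle here: the only point needing any care is the purely bookkeeping one of substituting scope variables into the gadget clauses in the correct order and coping with repeated variables in a scope (which just collapses some literals but never breaks the equivalence). The reduction being parsimonious, it satisfies all three requirements in the definition of an AP-reduction---a single oracle call with the original error bound, correctness of the resulting scheme whenever the \nSAT\ oracle is a correct randomised approximation scheme, and polynomial running time---so $\nCSP(\Gamma) \redAP \nSAT$.
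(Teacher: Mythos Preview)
Your argument is correct: the explicit CNF encoding of each relation in~$\Gamma$ gives a parsimonious polynomial-time many-one reduction from $\nCSP(\Gamma)$ to \nSAT, and any parsimonious reduction is automatically an AP-reduction.

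The paper, however, does not construct this reduction explicitly. It simply observes that $\nCSP(\Gamma)\in\nP$ and then cites the result from~\cite{bis} that every problem in~\nP\ is AP-reducible to \nSAT\ (i.e., that \nSAT\ is \nP-complete under AP-reductions). So the paper's ``proof'' is a one-line appeal to a general completeness theorem, whereas you give a concrete, self-contained parsimonious reduction. Your route is more elementary and avoids relying on the machinery behind the AP-completeness of \nSAT; the paper's route is shorter on the page but pushes the work into the cited reference. Both establish the observation without difficulty.
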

Observation~\ref{obs:SATeasy} follows
from the fact that all problems in \nP\
are AP-reducible to \nSAT \cite{bis}.
Another, very simple, but useful, observation is
the following.
\begin{observation}
\label{obs:ignore}
Let $\Gamma$ be a constraint language with domain
$\{0,1\}$.
Suppose $\Gamma'\subseteq \Gamma$.
Then $\nCSP(\Gamma')\redAP\nCSP(\Gamma)$.
\end{observation}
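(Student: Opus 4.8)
The plan is to observe that this is essentially a trivial many-one reduction, so a fortiori an AP-reduction. Any instance $I$ of $\nCSP(\Gamma')$ consists of a variable set $V$ together with a list of constraints, each pairing a scope with a relation drawn from $\Gamma'$. Since $\Gamma'\subseteq\Gamma$, every relation appearing in $I$ also lies in $\Gamma$, so (modulo the routine bookkeeping of recording which relation of $\Gamma$ each constraint uses) $I$ is verbatim an instance of $\nCSP(\Gamma)$. Crucially, the set of satisfying assignments of $I$ — and hence the count $\Z(I)$ — does not depend on whether we regard $I$ as an instance of $\nCSP(\Gamma')$ or of $\nCSP(\Gamma)$. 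Thus the identity map (up to re-encoding) transports instances of the former problem to instances of the latter while preserving the function value.

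Concretely, the AP-reduction $\mathcal{A}$ takes $(I,\varepsilon)$ as input, makes the single oracle call $(I,\varepsilon)$ to $\nCSP(\Gamma)$, and returns the value the oracle supplies. I would then verify the three conditions from the definition of AP-reduction in Section~\ref{sec:ap}: (i) the only oracle call has error bound $\delta=\varepsilon$, which trivially satisfies $\delta^{-1}\le\poly(|I|,\varepsilon^{-1})$; (ii) if the oracle meets the specification of a randomised approximation scheme for $\nCSP(\Gamma)$, then on $(I,\varepsilon)$ it returns $z$ with $\Pr[e^{-\varepsilon}\Z(I)\le z\le e^{\varepsilon}\Z(I)]\ge\tfrac34$, and since $\Z(I)$ is the same quantity for both problems this is exactly the guarantee $\mathcal{A}$ must provide as an approximation scheme for $\nCSP(\Gamma')$; and (iii) $\mathcal{A}$ does only a polynomial amount of work (indeed essentially none) beyond the oracle call. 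There is no real obstacle here; the only point requiring any care is checking that the encoding of a $\CSP(\Gamma')$ instance can be converted in polynomial time to a legal encoding of a $\CSP(\Gamma)$ instance, which is immediate.
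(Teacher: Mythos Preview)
Your proposal is correct and follows exactly the same approach as the paper, which simply notes that every instance of $\nCSP(\Gamma')$ \emph{is} an instance of $\nCSP(\Gamma)$. Your version just spells out the AP-reduction conditions explicitly, which is fine but more detail than the paper bothers with.
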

Observation~\ref{obs:ignore} is true for
the simple reason that every instance of
$\nCSP(\Gamma')$ \emph{is} an instance
of $\nCSP(\Gamma)$.

Recall the relations
$\ORzero=\{(0,1),(1,0),(1,1)\}$ and
$\ORtwo=\{(0,0),(0,1),\allowbreak(1,0)\}$.
These relations
are particularly fundamental for us, and we start with
complexity results about these.

\begin{lemma}
\label{lem:SATtwo}
$\nSAT\redAP \nCSP(\{\ORtwo\})$.
\end{lemma}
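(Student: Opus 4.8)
Since $\NAND(x,y)=\overline{x\wedge y}$ is an ``edge'' constraint, an instance $I$ of $\nCSP(\{\NAND\})$ is essentially a simple graph $G$ on its variable set (collapse repeated constraints; a constraint $\NAND(v,v)$ merely forces $v=0$, so we may assume none occur), and $\Z(I)$ equals the number of independent sets of $G$. So the plan is to give an AP-reduction from $\nSAT$ to $\nIS$, the problem of counting independent sets of a simple graph. The strategy is to attach integer weights so that, in the weighted independent-set sum of a graph built from the input formula, the satisfying assignments are the dominant term and their number appears as the coefficient of the unique top-degree monomial, which a single approximate oracle call then recovers by division.

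I will need a weighting gadget. If in a graph a vertex $v$ with neighbourhood $N(v)$ is replaced by $k$ pairwise non-adjacent copies, each joined to exactly $N(v)$, then a direct count shows the number of independent sets is multiplied by $2^{k}-1$ when ``$v$ is in'' and by $1$ when ``$v$ is out'', and doing this at several vertices behaves multiplicatively; so for any graph and positive integers $(k_v)_v$ there is a graph $G$ of size polynomial in the original size plus $\sum_v k_v$ whose number of independent sets is $\sum_{S}\prod_{v\in S}(2^{k_v}-1)$, summed over independent sets $S$ of the original. Taking all $k_v$ equal to one value $k$ this realises a single weight $\lambda:=2^{k}-1$, which can be made \emph{exponentially} large at polynomial cost --- the place where a naive ``replace $v$ by a clique of size $\lambda$'' would fail. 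Now take a CNF formula $\varphi$ on variables $x_1,\dots,x_n$ with clauses $C_1,\dots,C_m$: delete tautological clauses (this does not change the satisfying set) and output $0$ at once if a clause is empty; otherwise form a weighted graph $G_0$ with, for each variable $x_i$, a $\lambda$-weighted pair $T_i,F_i$ joined by an edge (``$T_i$ in''/``$F_i$ in''/``neither in'' coding $x_i=1$/$x_i=0$/``$x_i$ unassigned''), and for each clause $C_j$ a $\lambda$-weighted vertex $g_j$ adjacent to $w(\ell)$ for every literal $\ell$ of $C_j$, where $w(x_i):=F_i$ and $w(\bar x_i):=T_i$, so that ``$w(\ell)$ in'' codes ``$\ell$ false''.

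Because $C_j$ contains no literal together with its negation, $g_j$ fails to be blocked exactly when the partial assignment $\tau$ encoded by the chosen $T_i/F_i$ vertices does not falsify $C_j$; summing the weight $\lambda^{|S|}$ over independent sets $S$ of $G_0$ and grouping by $\tau$ gives
\[
  Z\ :=\ \sum_{S}\lambda^{|S|}\ =\ \sum_{\tau}\lambda^{|\tau|}\,(1+\lambda)^{s(\tau)},
\]
where $|\tau|$ is the number of variables assigned by $\tau$ and $s(\tau)$ the number of clauses it does not falsify. The top-degree monomial $\lambda^{n}(1+\lambda)^{m}$ is contributed precisely by the full satisfying assignments, so its coefficient is $N$, the number of satisfying assignments of $\varphi$; and as $1+\lambda\le2\lambda$ and there are at most $3^{n}$ partial assignments, all other monomials together contribute at most $(2^{m}3^{n}/\lambda)\cdot\lambda^{n}(1+\lambda)^{m}$, so $Z=\lambda^{n}(1+\lambda)^{m}\,(N+E)$ with $|E|\le 2^{m}3^{n}/\lambda$. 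Choosing $k$ polynomially large (e.g.\ $k=2n+2m+3$) makes $|E|$ exponentially small; one then realises $G_0$ as a plain graph $G$ by the gadget above, calls the $\nCSP(\{\NAND\})$ oracle once on $G$ with accuracy $\delta=\Theta(\epsilon)$ to get $\widehat Z$ within a factor $e^{\pm\delta}$ of $Z$ (the number of independent sets of $G$), and outputs the nearest non-negative integer to $\widehat Z/(\lambda^{n}(1+\lambda)^{m})$. The usual bookkeeping (taking $\delta$ small enough in terms of $\epsilon$, brute-forcing very small instances, and mapping a tiny output to $0$ in the case $N=0$) then verifies the three conditions in the definition of an AP-reduction. (Conversely $\nCSP(\{\NAND\})\redAP\nSAT$ by Observation~\ref{obs:SATeasy}, so in fact $\nCSP(\{\NAND\})\equalAP\nSAT$, though only this direction is asserted.)

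I expect the heart of the argument to be the construction rather than the arithmetic. Every $\nCSP(\{\NAND\})$-instance is satisfiable (the empty set is always independent), so no graph can have independent-set count equal to $N$; the weights must therefore be tuned so that the full satisfying assignments dominate and $N$ is the coefficient of their common top monomial, and the weights that do this are exponentially large --- which is exactly why the polynomial-size blow-up gadget, rather than a large clique, is needed. The remaining steps --- the gadget count, the grouping identity, the error bound, and the boundary cases ($N=0$, small $n+m$, empty clauses) --- are routine.
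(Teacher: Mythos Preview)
Your clause gadget is wrong, and the error propagates to the key claim about the top monomial.  With $g_j$ adjacent to $w(\ell)$ for each literal $\ell\in C_j$ and $w(x_i)=F_i$, $w(\bar x_i)=T_i$, the vertex $g_j$ is unblocked if and only if \emph{no} literal of $C_j$ is made false by~$\tau$.  For a full assignment this means every literal of $C_j$ is true, not merely that $C_j$ is satisfied.  Hence the coefficient of $\lambda^{n}(1+\lambda)^{m}$ counts full assignments making every literal of every clause true, not satisfying assignments.  Concretely, for $\varphi=(x_1\vee x_2)$ we have $N=3$, but only the assignment $(1,1)$ contributes to the top term, so your reduction would output~$1$.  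Swapping the adjacency rule (making $g_j$ adjacent to the ``$\ell$ true'' vertices) does not help either: then $g_j$ is unblocked exactly when $C_j$ is falsified, and unsatisfying full assignments and the empty partial assignment dominate.  A single clause vertex cannot encode ``at least one literal is true'' as an independent-set constraint, so a correct gadget must be more elaborate (e.g.\ per-literal clause vertices with further structure), and then the clean grouping identity you rely on no longer holds.

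The paper avoids all of this: it simply invokes the result from~\cite{bis} that \nIS\ is AP-interreducible with \nSAT, and then observes (exactly as you do in your first sentence) the bijection between independent sets of a graph and satisfying assignments of the corresponding $\nCSP(\{\NAND\})$ instance.  So the only step you actually need is the easy one you already have; the hard step $\nSAT\redAP\nIS$ is a nontrivial theorem of~\cite{bis} and is cited, not reproved.
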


\begin{proof}

It was shown in~\cite{bis}
that the following problem is AP-interreducible with \nSAT.

\prob{\IS.}{A graph $G$.}
{The number of independent sets
in~$G$.}

We show that $\nIS \redAP \nCSP(\{\ORtwo\})$.
Let $G=(V,E)$ be an instance of $\nIS$.
Construct an instance $I$ of $\nCSP(\{\ORtwo\})$
with variable set~$V$.
For every edge $(u,v)\in E$, add constraint $\ORtwo(u,v)$.
There is now a bijection between independent sets of~$G$
and satisfying assignments~$\sigma$ of~$I$:
variables $v$ with $\sigma(v)=1$
correspond to vertices in the independent set.

\end{proof}

\begin{lemma}
\label{lem:SATzero}
$\nSAT \redAP\nCSP(\{\ORzero\}) $.
\end{lemma}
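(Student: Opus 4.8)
The plan is to mirror the proof of Lemma~\ref{lem:SATtwo}, exploiting the complementation symmetry that relates $\ORzero$ to $\ORtwo$. As before, it suffices to show $\nIS\redAP\nCSP(\{\ORzero\})$, since $\nSAT\equalAP\nIS$ by~\cite{bis}. Given an instance $G=(V,E)$ of $\nIS$, I would construct an instance $I$ of $\nCSP(\{\ORzero\})$ with variable set~$V$, adding the constraint $\ORzero(u,v)$ for every edge $(u,v)\in E$.

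The key observation is that an assignment $\sigma\colon V\to\{0,1\}$ satisfies $I$ if and only if every edge of $G$ has at least one endpoint $v$ with $\sigma(v)=1$; that is, if and only if $\{v:\sigma(v)=1\}$ is a vertex cover of~$G$, equivalently $\{v:\sigma(v)=0\}$ is an independent set of~$G$. The map $\sigma\mapsto\{v:\sigma(v)=0\}$ is therefore a bijection between the satisfying assignments of~$I$ and the independent sets of~$G$, so $\Z(I)$ equals the number of independent sets in~$G$. Composing with $\nSAT\equalAP\nIS$ gives $\nSAT\redAP\nCSP(\{\ORzero\})$.

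There is essentially no obstacle here: the only point worth flagging is the elementary duality between vertex covers and independent sets (complementation of the $0$/$1$ roles), which replaces the direct ``$1$-variables form an independent set'' correspondence used for $\ORtwo$. One could equivalently note that flipping every variable turns an $\ORzero$-constraint into an $\ORtwo$-constraint, so the result follows from Lemma~\ref{lem:SATtwo} together with the fact that complementing all variables induces a bijection on satisfying assignments; I would, however, prefer to give the short self-contained argument above.
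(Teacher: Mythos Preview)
Your argument is correct and matches the paper's own proof essentially verbatim: the paper likewise reduces from \nIS and simply remarks that one should associate variables $v$ with $\sigma(v)=1$ to vertices \emph{out} of the independent set, which is exactly your vertex-cover/independent-set duality observation.
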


\begin{proof}

The proof that $\nIS \redAP \nCSP(\{\ORzero\})$ is similar
(just associate variables $v$ with $\sigma(v)=1$ with
vertices that are \emph{out} of
the independent set).

\end{proof}

Finally, we will need a couple of complexity results
involving $\nBIS$.

\begin{lemma}
\label{lem:BISone}
$\nBIS \redAP \nCSP(\{\ORone\})$.
\end{lemma}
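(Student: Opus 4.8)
The plan is to give a \emph{parsimonious} polynomial-time reduction from $\nBIS$ directly to $\nCSP(\{\Implies\})$; since any parsimonious reduction is in particular an AP-reduction, this will establish the lemma. The starting observation is that $\Implies=\ORone$ is the relation $\{(0,0),(0,1),(1,1)\}$, so the constraint $\Implies(x,y)$ holds exactly when $x\le y$ and is violated only by the pair $(x,y)=(1,0)$. Hence an instance of $\nCSP(\{\Implies\})$ is essentially a digraph whose satisfying assignments are the monotone $\{0,1\}$-labellings of the vertices, and the task is just to realise the independent sets of a bipartite graph as precisely such labellings. The trick that makes this work is to complement the labels on one side of the bipartition.

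Concretely, I would proceed as follows. Let $B$ be an instance of $\nBIS$, with vertex bipartition $U\cup W$ and edge set $E\subseteq U\times W$. Construct an instance $I$ of $\nCSP(\{\Implies\})$ with variable set $\{a_u:u\in U\}\cup\{b_w:w\in W\}$, and for every edge $(u,w)\in E$ impose the constraint $\Implies(a_u,b_w)$. Given an independent set $S$ of $B$, define the assignment $\sigma$ by setting $\sigma(a_u)=1$ iff $u\in S$ and $\sigma(b_w)=0$ iff $w\in S$. Then $\sigma$ violates the constraint associated with $(u,w)$ exactly when $\sigma(a_u)=1$ and $\sigma(b_w)=0$, i.e.\ exactly when both $u$ and $w$ lie in $S$; so $\sigma$ is satisfying if and only if $S$ is independent. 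This map is a bijection between the satisfying assignments of $I$ and the independent sets of $B$: from an arbitrary $\sigma$ one recovers the set $\{u\in U:\sigma(a_u)=1\}\cup\{w\in W:\sigma(b_w)=0\}$, which is independent precisely when $\sigma$ satisfies $I$. Consequently $\Z(I)$ equals the number of independent sets of $B$, and $I$ is clearly computed from $B$ in polynomial time, completing the (parsimonious) reduction.

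I expect there to be no real obstacle here. The only point requiring a little care is the orientation of the implications, but this is immaterial: $\Implies$ coincides with its own ``reverse'' after complementing both arguments, which is exactly the freedom exploited above. As an alternative route, one could instead invoke the result of~\cite{bis} that counting downsets of a partial order is AP-interreducible with $\nBIS$, and then reduce \downsets to $\nCSP(\{\Implies\})$ by taking the poset elements as variables and adding the constraint $\Implies(x_b,x_a)$ for each covering pair $a<b$, so that satisfying assignments correspond bijectively to downsets $\{a:\sigma(a)=1\}$. Either way the content of the lemma is simply that $\Implies$-constraints encode the bipartite incidence structure once one side is relabelled, and the approximation-preserving aspect comes for free from parsimony.
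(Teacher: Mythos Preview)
Your proof is correct and is essentially identical to the paper's own argument: the paper also takes the bipartite graph with parts $U$ and $V$, introduces one variable per vertex, adds $\ORone(u,v)$ for each edge $(u,v)$ with $u\in U$, and observes the bijection in which $u\in U$ is in the independent set iff $\sigma(u)=1$ and $v\in V$ is in the independent set iff $\sigma(v)=0$. Your additional remarks about parsimony implying an AP-reduction and the alternative route via \downsets are sound but go beyond what the paper bothers to spell out.
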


\begin{proof}

Let $G$ be an instance of $\nBIS$ with vertex sets~$U$ and~$V$ and edge set~$E$.
Construct an instance $I$ of $\nCSP(\{\ORone\})$ with variable set~$U \cup V$.
For every edge $(u,v)\in E$ with $u\in U$  add constraint $\ORone(u,v)$.
There is now a bijection between independent sets of~$G$
and satisfying assignments~$\sigma$ of~$I$: a variable $u\in U$ with $\sigma(u)=1$ is in the independent
set and a variable $v\in V$ with $\sigma(v)=0$ is in the independent set.

\end{proof}

\begin{lemma}
\label{lem:downsets}
Suppose $\Gamma \subseteq \IMtwo$. Then
$\nCSP(\Gamma)\redAP\nBIS$.
\end{lemma}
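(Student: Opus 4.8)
The plan is to reduce an arbitrary instance of $\nCSP(\Gamma)$ with $\Gamma\subseteq\IMtwo$ to an instance of \nBIS, exploiting the fact that every relation in \IMtwo is, by definition, a conjunction of the unary relations $\delta_0,\delta_1$ and binary implications $\Implies$. First I would take an instance $I$ of $\nCSP(\Gamma)$ with variables $V$ and constraints $C_1,\dots,C_m$. Since each constraint $C_j$ applies some $R\in\Gamma\subseteq\IMtwo$ to a scope $(v_{j,1},\dots,v_{j,k})$, and $R(x_1,\dots,x_k)$ is logically equivalent to a conjunction of atoms of the form $\delta_0(x_i)$, $\delta_1(x_i)$, $\Implies(x_i,x_{i'})$, I can replace $C_j$ by that conjunction of atomic constraints over the same variables. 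Doing this for every constraint yields an instance $I'$ of $\nCSP(\{\delta_0,\delta_1,\Implies\})$ with $\Z(I')=\Z(I)$ and size polynomial in $|I|$ (the number of atoms in each expansion depends only on $\Gamma$, which is fixed). So it suffices to show $\nCSP(\{\delta_0,\delta_1,\Implies\})\redAP\nBIS$.

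Next I would handle the pinning constraints $\delta_0,\delta_1$ by simple preprocessing rather than by gadgets. A variable pinned to a constant can be substituted out: propagate the pinned value through the implication constraints (if $u$ is pinned to $1$ and $\Implies(u,w)$ is a constraint, then $w$ is forced to $1$; if $w$ is pinned to $0$ and $\Implies(u,w)$ is a constraint, then $u$ is forced to $0$), iterating until no further forcing occurs. This either reveals an inconsistency — in which case $\Z(I')=0$ and we output, say, a fixed one-vertex non-edge-free instance, or just report $0$ directly — or leaves a reduced instance $I''$ over a subset $V''\subseteq V$ using only the binary relation $\Implies$, with $\Z(I'')=\Z(I')$. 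It remains to show $\nCSP(\{\Implies\})\redAP\nBIS$.

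For the final step I would reverse the construction in Lemma~\ref{lem:BISone}. An instance of $\nCSP(\{\Implies\})$ is a directed graph $D=(V'',A)$ where each arc $(u,w)$ imposes $\Implies(u,w)$, i.e.\ $\overline{u}\vee w$; a satisfying assignment is exactly a \emph{downset} (closed under predecessors) in the sense that $\sigma(u)=1$ and $(u,w)\in A$ force $\sigma(w)=1$ — equivalently, $\{u:\sigma(u)=0\}$ is closed under predecessors. Counting these is the problem of counting downsets in the preorder generated by $D$, which was shown in~\cite{bis} to be AP-interreducible with \nBIS; indeed one can build the bipartite graph directly by the construction dual to Lemma~\ref{lem:BISone} (make two copies $V''_{\mathrm{in}}$, $V''_{\mathrm{out}}$ of $V''$ — or use the downsets-to-\nBIS reduction of~\cite{bis} as a black box). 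Composing the three AP-reductions — relational expansion, pin-propagation, and the downset/\nBIS step — gives $\nCSP(\Gamma)\redAP\nBIS$.

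The main obstacle is not any single step but making sure the relational expansion in the first step is valid for \emph{all} of \IMtwo: I am relying on the definitional characterisation of \IMtwo given in Section~\ref{sec:IMtwo}, namely that membership in \IMtwo is \emph{exactly} expressibility as such a conjunction, so that the expansion is available relation-by-relation with a bounded blow-up. The other mild subtlety is that the pin-propagation must correctly handle the case where forcing collides (a variable forced both to $0$ and to $1$), which is where the instance is unsatisfiable; this is routine but must be stated. Everything after that is a direct appeal to the \nBIS/\downsets interreducibility of~\cite{bis}.
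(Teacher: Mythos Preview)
Your proposal is correct and follows essentially the same route as the paper: expand each $\IMtwo$ constraint into $\delta_0$, $\delta_1$, $\Implies$ atoms, propagate the pins to strip out forced variables (detecting unsatisfiability if a collision occurs), and then appeal to the $\downsets\equalAP\nBIS$ result of~\cite{bis}. The only step the paper makes explicit that you gloss over is collapsing strongly connected components of the implication digraph so that what remains is a genuine partial order (the input type for \downsets), rather than merely a preorder; this is trivial to add and does not change the count.
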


\begin{proof}

It is straightforward to show that $\nCSP(\Gamma)$
is in the complexity class \BISclass\ which
has \nBIS\ as a complete problem \cite{bis}.

However, to avoid giving a definition of \BISclass,
which requires some notation, we will instead
show $\nCSP(\Gamma)\redAP \downsets$,
where \downsets\ is the following
counting problem which
was shown in~\cite{bis}
to be AP-interreducible with \nBIS.

\prob{\downsets.}{A partially ordered set
$(X,{\preceq})$.}{The number
of downsets\footnote{A {\it downset\/} in $(X,{\preceq})$ is a subset
$D\subseteq X$ that is closed under $\preceq$; i.e., $x\preceq y$ and
$y\in D$ implies $x\in D$.}  in $(X,{\preceq})$.}

Consider an instance~$I$ of $\nCSP(\Gamma)$
with variables $v_1,\ldots,v_n$.
The set of constraints can be viewed as an equivalent set of
constraints of the form
$\delta_0(v_i)$, $\delta_1(v_i)$
or $\Implies(v_i,v_j)$.
Denote by $\Implies^*$ the transitive closure
of the $\Implies$ relation on $\{v_1,\ldots,v_n\}$:
thus $\Implies^*(v_i,v_j)$ if there is a sequence of
variables, starting with $v_i$ and ending with $v_j$,
such that every adjacent pair in the sequence is
constrained by $\Implies$.

Let
$N_0(I)$ be the set of variables~$v_i$ for which either
(i)~a constraint $\delta_0(v_i)$ occurs in~$I$,
or  (ii)~there exists a variable $v_j$ such that
$\Implies^*(v_i,v_j)$ and a constraint $\delta_0(v_j)$
occurs in~$I$.
These are the variables that are forced to be~$0$
in any satisfying assignment of~$I$.
Define $N_1(I)$ analogously to be the set of variables
that are forced to be~$1$ in any satisfying assignment.
We can assume without loss of generality that $N_0(I)$
and $N_1(I)$
are disjoint. Otherwise the instance~$I$ has no satisfying
assignments, and we can determine this without even using
the downsets oracle.

Now remove all the variables in $N_0(I)$ and $N_1(I)$ from the instance~$I$:
this does not affect the number of satisfying assignments, since
these variables do not constrain any of the others.
Also identify all pairs of variables $v_i,v_j$ such that
$\Implies^*(v_i,v_j)$ and $\Implies^*(v_j,v_i)$:  again, this
does not affect the number of satisfying assignments.

The remaining variables and relations define a partial order
$(X,{\preceq})$ since our construction forces antisymmetry.
The satisfying assignments of~$I$ correspond $1$--$1$ with
the downsets of $(X,{\preceq})$.
\end{proof}

\subsection{A useful tool: pinning}
\label{sec:pinning}

\emph{Pinning}
is the ability to tie certain CSP variables
to specific values in hardness proofs.
This idea was used by Creignou and Hermann in their
dichotomy theorem~\cite{ch}.
Similar ideas have been used in
many other hardness proofs and dichotomy
theorems~\cite{BD, BG05, bcsp, DG}.
As we show in this section, AP-reductions
facilitate a particularly useful form of pinning.

\begin{lemma}
\label{powerpin}
Let $\Gamma$ be a constraint language with domain
$\{0,1\}$.
Suppose there is a relation $R\in \Gamma$ for
which, for some position $j$, $R$ has more tuples $t$
with $t_j=0$ than with $t_j=1$. Then
$\nCSP(\Gamma\cup\{\delta_0\}) \redAP
\nCSP(\Gamma)$.
Similarly, if there is a relation $R\in \Gamma$
for which, for some position $j$, $R$ has
more tuples $t$ with $t_j=1$ than with $t_j=0$ then
$\nCSP(\Gamma\cup\{\delta_1\}) \redAP
\nCSP(\Gamma)$.
\end{lemma}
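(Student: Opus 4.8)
The plan is to prove the first statement (the second being symmetric, obtained by swapping the roles of $0$ and $1$ throughout). The key idea is that, although we cannot directly use the constraint $\delta_0(v)$ in an instance of $\nCSP(\Gamma)$, we can \emph{simulate} it probabilistically: given a relation $R\in\Gamma$ and a position $j$ such that $R$ has strictly more tuples with $t_j=0$ than with $t_j=1$, a single fresh application of $R$ with a fresh variable in every coordinate except coordinate $j$ acts as a ``biased gadget'' that, when attached to a variable $v$, multiplies the weight of assignments with $v=0$ by a factor larger than the weight of assignments with $v=1$. Stacking $m$ independent copies of this gadget on $v$ amplifies the bias to a factor of $(a/b)^m$ where $a>b\ge 1$ are the two tuple-counts, which is exponentially large in $m$.

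First I would set up the reduction. Given an instance $I$ of $\nCSP(\Gamma\cup\{\delta_0\})$ with variables $v_1,\dots,v_n$, let $S\subseteq\{v_1,\dots,v_n\}$ be the set of variables pinned to $0$ by some constraint $\delta_0$. Form an instance $I'$ of $\nCSP(\Gamma)$ by deleting all the $\delta_0$ constraints and, for each $v\in S$, adding $m$ copies of the gadget just described (each copy using its own private fresh variables, of which there are at most $(\mathrm{ar}(R)-1)$ per copy, so $I'$ has polynomially many variables when $m$ is polynomial). Then $\Z(I')=\sum_{\sigma}\prod_{v\in S}w(\sigma(v))$, where the sum is over satisfying assignments $\sigma$ of $I$ \emph{with the $\delta_0$ constraints removed}, $w(0)=a^m$ and $w(1)=b^m$, and I would verify this by observing that the fresh gadget variables can be summed out independently for each copy. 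Dividing through by $a^{m|S|}$, we get $\Z(I')/a^{m|S|}=\sum_\sigma (b/a)^{m\cdot|\{v\in S:\sigma(v)=1\}|}$. The term coming from assignments that respect all the pins (i.e.\ $\sigma(v)=0$ for all $v\in S$) contributes exactly $\Z(I)$, and every other satisfying assignment contributes at most $(b/a)^m<1$ times a bounded number of terms; since there are at most $2^n$ assignments in all, choosing $m$ so that $(b/a)^m<2^{-n}\cdot\varepsilon/4$ (equivalently $m=O(n+\log\varepsilon^{-1})$, which is polynomial) makes the error term negligible relative to $\Z(I)$ — \emph{provided} $\Z(I)\ge 1$.

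The main obstacle is exactly that proviso: the argument above shows how to approximate $\Z(I)$ when it is nonzero, but an FPRAS must also handle instances with $\Z(I)=0$, and we cannot tell these apart from the outside. The standard fix, and the one I would use, is to make the reduction \emph{self-reducible in the pins}: rather than approximating $\Z(I)$ directly, reduce to computing, for successive variables, the number of satisfying extensions, or alternatively reduce the whole thing to estimating $\Z(I')$ itself (which the $\nCSP(\Gamma)$ oracle does) and then argue that $\Z(I)$ can be \emph{recovered exactly} from a sufficiently good approximation to the integer $\Z(I')$: indeed $\Z(I)$ is the nearest integer to $\Z(I')/a^{m|S|}$ once $m$ is large enough that the tail is below $1/2$, and a $(1\pm\delta)$-approximation to $\Z(I')$ with $\delta$ polynomially small suffices to determine this nearest integer (here one uses that $\Z(I')\le a^{m|S|}\cdot 2^{\mathrm{poly}}$ so its bit-length is polynomial). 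This rounding step both produces the exact answer and disposes of the $\Z(I)=0$ case uniformly. Finally I would check the bookkeeping required by the definition of AP-reduction: the single oracle call is made with error parameter $\delta^{-1}=\poly(|I|,\varepsilon^{-1})$, and all the gadget construction and the rounding are deterministic polynomial-time, so conditions (i)–(iii) of an AP-reduction are met.
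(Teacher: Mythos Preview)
Your construction is exactly the paper's: to each pinned variable attach $m$ independent copies of $R$ (with fresh auxiliary variables) at position~$j$, divide the oracle's estimate of $\Z(I')$ by $a^{m|S|}$, and round. The one place your argument slips is the claim that ``a $(1\pm\delta)$-approximation to $\Z(I')$ with $\delta$ polynomially small suffices to determine this nearest integer.'' This is false when $\Z(I)$ is large: if $\Z(I)=N$ then $\Z(I')/a^{m|S|}\in[N,N+\tfrac12]$, and a $(1\pm\delta)$-approximation to $\Z(I')$ becomes an additive error of order $\delta N$ after dividing; rounding correctly then requires $\delta<1/(2N)$, which is exponentially small when $N$ is near $2^n$. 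The AP-reduction definition forbids this, since $\delta^{-1}$ must be $\poly(|I|,\varepsilon^{-1})$. (The remark that $\Z(I')$ has polynomial bit-length does not help: recovering an integer of polynomial bit-length from a multiplicative approximation still needs exponentially small $\delta$.)

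The fix, which is what the paper does, is not to insist on exact recovery. Take $m$ so that the tail $2^n(b/a)^m$ is at most $\tfrac14$ (so $m=O(n)$), set $\delta=\varepsilon/c$ for a suitable constant~$c$, make one oracle call, divide by $a^{m|S|}$, and round down. If $N\le 2/\varepsilon$, the multiplicative error $\delta$ on a quantity of size at most $N+\tfrac14$ is an additive error below $\tfrac14$, so together with the tail the rounded answer is exact; this in particular handles $N=0$. If $N>2/\varepsilon$, the output lies in $[(N-\tfrac14)e^{-\delta}-\tfrac12,\,(N+\tfrac14)e^{\delta}+\tfrac12]\subseteq[Ne^{-\varepsilon},Ne^{\varepsilon}]$, since the additive $\pm\tfrac34$ is swamped by the permitted relative error. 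Your earlier paragraph already had the right argument for $\Z(I)\ge1$; the point is simply that rounding takes care of $\Z(I)=0$ without requiring exact recovery in the large-$N$ regime.
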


\begin{proof}
Consider an instance $I$ of
$\nCSP(\Gamma\cup\{\delta_0\})$
with $n$ variables.
Suppose there is an arity-$k$ relation $R\in \Gamma$ for
which, for position $j$, $R$ has $w$ tuples~$t$
with $t_j=0$ and $w'<w$ tuples~$t$ with $t_j=1$.

As in the proof of Lemma~\ref{lem:downsets}, let
$N_0(I)$ be the set of variables~$x$ to which
one or more constraints $\delta_0(x)$ occurs in~$I$
and let $N_1(I)$ be the set of variables~$y$ to
which one or more constraints $\delta_1(y)$ occurs.
Let $n_0=|N_0(I)|$.
Let $m=\lceil
(n+2)/\lg(w/w')
\rceil$.
Construct an instance $I'$ of
$\nCSP(\Gamma)$. Include all constraints in~$I$
other than those involving $\delta_0$.
For each variable $x\in N_0(I)$, and every $a \in \{1,\ldots,m\}$,
introduce $k-1$ new variables $x'_{a,b}$ for $b\in\{1,\ldots,k\}-\{j\}$.
Introduce a new constraint in $I'$ with relation $R$ and
variable $x$ in the $j\,$th position, and
$x'_{a,b}$ in the $b\,$th position, for all~$b$.

Now a satisfying assignment for $I$
can be extended in $w^{m n_0}$ ways to satisfying
assignments of~$I'$.
An assignment for $I$ that violates
one of the $\delta_0(x)$ constraints can be extended in at most
$w^{m (n_0-1)} {w'\,}^m$ ways to
satisfying assignments
of~$I'$.
Thus,
$$\Z(I) w^{m n_0}
 \leq \Z(I') \leq
\Z(I) w^{m n_0} + 2^n w^{m (n_0-1)} {w'\,}^m,$$
i.e.,
$$
\Z(I) 
 \leq \frac{\Z(I')}{w^{mn_{0}}} \leq
\Z(I) + 2^n(w'/w)^m.
$$
So, by definition of~$m$,
$$\Z(I) \leq \frac{\Z(I')}{w^{m n_0}}\leq \Z(I)+\frac{1}{4}.$$

Thus we have constructed a reduction
from $\nCSP(\Gamma\cup\{\delta_0\})$
to $\nCSP(\Gamma)$:
Given an instance~$I$ of~$\nCSP(\Gamma\cup\{\delta_0\})$,
use an oracle for $\nCSP(\Gamma)$
to approximate $\Z(I')$,
divide by $w^{m n_0}$, and
round
to the nearest integer (always down).
Note that the reduction makes only one oracle call
(and uses no randomisation).

To show that the reduction is indeed an AP-reduction,
we add some technical details concerning the choice of
the accuracy parameter~$\delta$ in the
oracle call (see the definition of AP-reduction
in Section~\ref{sec:ap}). These details are here to make
the proof   complete, but they are not essential
for understanding the rest of the paper.

If we had
$$\Z(I) = \frac{\Z(I')}{w^{m n_0}},$$
we could
simply set $\delta=\epsilon$, since division by a constant
preserves relative error.
Instead we have
$$\Z(I) = \left\lfloor\frac{\Z(I')}{w^{m n_0}}\right\rfloor.$$
The discontinuous floor function could
spoil the approximation when its argument is small.

The situation here is that the true answer
$N=\Z(I)$ is obtained by rounding the
fraction $Q=\frac{\Z(I')}{w^{m n_0}}$
where we have $|Q-N|\leq1/4$.

Suppose that the oracle
provides an approximation~$\Qhat$ to~$Q$ satisfying
$Qe^{-\delta}\leq \Qhat\leq Qe^{\delta}$
(as it is required to do
with probability at least $3/4$).
Set $\delta=\epsilon/21$,
where $\epsilon$ is the accuracy parameter
governing the final result.
There are two cases.  If $N\leq 2/\epsilon$, then a short
calculation yields $|\Qhat-Q|<1/4$ implying that the result returned
by the algorithm is exact.  
If $N> 2/\epsilon$, then the result returned
is in the range $[(N-1/4)e^{-\delta}-1/2,(N+1/4)e^{\delta}+1/2]$
which, for the chosen $\delta$, is contained in
$[Ne^{-\epsilon},Ne^{\epsilon}]$.

Thus, we have an AP-reduction from
$\nCSP(\Gamma\cup\{\delta_0\})$ to
$\nCSP(\Gamma)$.
The reduction showing
$\nCSP(\Gamma\cup\{\delta_1\}) \redAP
\nCSP(\Gamma)$ is similar.
\end{proof}

\subsection{Affine relations}

We use the following well-known facts about affine relations.
\begin{lemma}

\begin{enumerate}
\item A $k$-ary Boolean relation~$R$ is affine if and only if
$a,b,c\in R$ implies $d=a\oplus b \oplus c\in R$,
where the $\oplus$ operator is applied componentwise.
\item If $R$ is not affine,
then for any fixed $a\in R$
there are $b,c\in R$ such that $a \oplus b \oplus c \not\in R$.
\item If $R$ is not affine, then there are $a,b$ in $R$
such that $a \oplus b\not\in R$.
\end{enumerate}
\label{affine:facts}
\end{lemma}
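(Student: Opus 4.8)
The plan is to prove part~(1) first and then derive parts~(2) and~(3) from it by short syntactic manipulations.

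For part~(1), the forward direction is routine: if $R$ is the set of solutions of a linear system $M\mathbf{x}=\mathbf{v}$ over GF$(2)$ and $a,b,c\in R$, then, working modulo~$2$, $M(a\oplus b\oplus c)=Ma\oplus Mb\oplus Mc=\mathbf{v}\oplus\mathbf{v}\oplus\mathbf{v}=\mathbf{v}$, so $a\oplus b\oplus c\in R$. For the converse I would assume $R$ is closed under the ternary operation $(a,b,c)\mapsto a\oplus b\oplus c$ and produce the linear system. The empty relation is handled separately, since it is the solution set of an inconsistent system and is therefore affine (it satisfies the closure condition vacuously). If $R\neq\emptyset$, fix some $a_0\in R$ and put $W=\{\,x\oplus a_0 : x\in R\,\}$. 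The key observation is that $W$ is a GF$(2)$-linear subspace of $\{0,1\}^k$: it contains $\mathbf{0}=a_0\oplus a_0$, and for $x,y\in R$ one has $(x\oplus a_0)\oplus(y\oplus a_0)=(x\oplus y\oplus a_0)\oplus a_0\in W$ because $x\oplus y\oplus a_0\in R$ by the closure hypothesis; closure under $\oplus$ together with containing $\mathbf{0}$ is all one needs over GF$(2)$. Hence $W=\{\mathbf{x}: M\mathbf{x}=\mathbf{0}\}$ for a suitable matrix~$M$, and then $R=\{\mathbf{x}: \mathbf{x}\oplus a_0\in W\}=\{\mathbf{x}: M\mathbf{x}=Ma_0\}$ is affine.

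For part~(2) I would argue by contraposition, showing that the stated negation forces $R$ to satisfy the hypothesis of part~(1). So suppose there is $a\in R$ with $a\oplus b\oplus c\in R$ for every $b,c\in R$; given arbitrary $x,y,z\in R$, apply this once with $(b,c)=(x,y)$ to get $w:=a\oplus x\oplus y\in R$, and then once more with $(b,c)=(w,z)$ to get $a\oplus w\oplus z=x\oplus y\oplus z\in R$. Thus $R$ is closed under ternary XOR and is affine by part~(1). Part~(3) is similar and even shorter: if $a\oplus b\in R$ for all $a,b\in R$, then for any $x,y,z\in R$ we have $x\oplus y\in R$ and hence $(x\oplus y)\oplus z\in R$, so part~(1) applies again (and the empty relation is affine anyway).

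I do not expect a real obstacle here; this is a warm-up lemma. The only points needing a moment's care are the empty-relation edge case and the little bootstrapping trick in part~(2), where the single pinned element $a$ has to be used twice to upgrade the weak closure property (one slot fixed to $a$) into genuine closure under ternary XOR for all triples.
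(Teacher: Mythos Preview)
Your proposal is correct and essentially matches the paper's approach: the two-step bootstrapping in part~(2) (use the pinned element $a$ twice to upgrade to full ternary closure) and the immediate reduction in part~(3) are exactly what the paper does. The only difference is that you supply a self-contained proof of part~(1), whereas the paper merely cites Lemma~4.10 of Creignou, Khanna and Sudan; your coset-of-a-subspace argument is the standard one and is fine.
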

\begin{proof}
For Part~(i) see, for example, Lemma~4.10 of~\cite{cks}).
Part~(ii) is proved in the same place, but since it is
a little less well-known, we provide the proof:
Suppose the contrary that $R$ is not affine, but for all $b,c\in R$,
$a \oplus b \oplus c\in R$.
Choose $s_0,s_1,s_2 \in R$ such that $s_0 \oplus s_1 \oplus s_2 \not\in R$.
From $b=s_0$, $c=s_1$, $d=a \oplus s_0 \oplus s_1$
we have $d \in R$.
From $b=s_2$, $c=d$ we have $a \oplus s_2 \oplus d = s_0 \oplus s_1 \oplus s_2 \in R$,
a contradiction.

To see Part~(iii), note that the condition
``$\forall a,b:$ $a,b\in R$ implies $a\oplus b \in R$''
implies that $R$ is affine, so, if $R$ is not affine then
the condition is false.
\end{proof}

\subsection{Implementation}
\label{sec:implementation}

Let $\Gamma$ be a constraint language with domain
$\{0,1\}$. $\Gamma$ is said to
\emph{implement}\footnote{There are many
variants of ``implement'' defined in the literature.
See \cite[Chapter~5]{cks}, where the kind of implementation
we define here is called ``faithful'' and ``perfect''.}
a $k$-ary relation~$R$
if, for some $k'\geq k$
there is a CSP instance~$I$
with variables $x_1,\ldots,x_{k'}$
and constraints in~$\Gamma$
such that, for every tuple $(s_1,\ldots,s_k)\in R$,
there is exactly one satisfying
assignment~$\sigma$ of~$I$ with
$\sigma(x_1)=s_1,\ldots,\sigma(x_k)=s_k$
and for every
tuple $(s_1,\ldots,s_k)\not\in R$,
there are no satisfying
assignments~$\sigma$ of~$I$ with
$\sigma(x_1)=s_1,\ldots,\sigma(x_k)=s_k$.
Note the following straightforward observation,
which is essentially a parsimonious reduction~\cite[p.441]{Papa94}.
\begin{observation}
\label{obs:implement}
If\/ $\Gamma$ implements~$R$
then $\nCSP(\Gamma\cup \{R\}) \redAP
\nCSP(\Gamma)$.
\end{observation}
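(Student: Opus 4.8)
The plan is to exhibit a parsimonious many-one reduction and then observe that it is trivially approximation-preserving. Fix once and for all the implementing instance $I$ with variables $x_1,\dots,x_{k'}$ and constraints over $\Gamma$ that witnesses the fact that $\Gamma$ implements~$R$; note that $I$ is a fixed finite object, since $\Gamma$ and $R$ are fixed, and in particular $k'$ and the number of constraints of~$I$ are constants.

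Given an instance $J$ of $\nCSP(\Gamma\cup\{R\})$ with variable set~$W$, I would build an instance $J'$ of $\nCSP(\Gamma)$ as follows. Keep all variables of~$J$ and all constraints of~$J$ whose relation belongs to~$\Gamma$. For each constraint of~$J$ of the form $R(y_1,\dots,y_k)$, introduce $k'-k$ \emph{fresh} variables --- a private copy per occurrence of such a constraint --- playing the roles of $x_{k+1},\dots,x_{k'}$, and add to $J'$ a copy of every constraint of~$I$ with $x_1,\dots,x_k$ replaced by $y_1,\dots,y_k$ and $x_{k+1},\dots,x_{k'}$ replaced by the corresponding fresh copies. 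Since only a constant number of variables and constraints is added per $R$-constraint, $J'$ has size linear in that of~$J$ and is computable in polynomial time.

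Next I would verify that $\Z(J')=\Z(J)$. Because the fresh variables introduced for distinct $R$-constraints are pairwise disjoint, and each gadget shares with the rest of $J'$ only variables of~$W$, the satisfying extensions of a fixed assignment $\tau\colon W\to\{0,1\}$ to a satisfying assignment of~$J'$ can be chosen independently over each gadget. By the defining property of implementation, the gadget built from $R(y_1,\dots,y_k)$ has exactly one satisfying extension of~$\tau$ if $(\tau(y_1),\dots,\tau(y_k))\in R$ and none otherwise. Hence $\tau$ has exactly $1$ satisfying extension to~$J'$ if it satisfies $J$ and $0$ otherwise, which gives a bijection between satisfying assignments of~$J$ and of~$J'$, so $\Z(J')=\Z(J)$.

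Finally, the AP-reduction itself: on input $(J,\epsilon)$, construct $J'$ in polynomial time, make a single oracle call to $\nCSP(\Gamma)$ with input $(J',\epsilon)$, and return the oracle's answer. Conditions (i)--(iii) of the definition of AP-reduction in Section~\ref{sec:ap} are immediate --- one oracle call with error parameter~$\epsilon$, polynomial running time, and, since $\Z(J')=\Z(J)$, any output satisfying the approximation specification for~$J'$ also satisfies it for~$J$. There is no genuine obstacle here; the only point that must be handled with care is to use \emph{private} internal variables for each gadget, since it is precisely this disjointness that makes the number of satisfying extensions multiplicative over gadgets and hence yields the exact identity $\Z(J')=\Z(J)$.
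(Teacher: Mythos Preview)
Your proof is correct and is exactly the intended argument: the paper does not give a proof at all, merely remarking that the observation ``is essentially a parsimonious reduction~\cite[p.441]{Papa94}'', and your construction spells out precisely that parsimonious reduction (gadget substitution with fresh private auxiliary variables per $R$-constraint, yielding $\Z(J')=\Z(J)$).
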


We will use several implementations of
Creignou, Khanna and Sudan.
Proofs are provided in the appendix in order to make the paper self-contained.

\begin{lemma}
\label{implementone}
(Creignou, Khanna and Sudan, \cite[Lemmas 5.24 and 5.25]{cks})
Let $\Gamma$ be a constraint language with domain
$\{0,1\}$.
\begin{enumerate}
\item
If\/ $\Gamma$ contains a relation $R$ that
is 0-valid, 1-valid and not complement-closed then
$\Gamma$ implements the relation
$R'=\{(0,0),(1,1),(1,0)\}$.
\item If\/ $\Gamma$ contains a relation $R$ that
is not 0-valid, not 1-valid and not comple\-ment-closed then
$\Gamma$ implements~$\delta_0$ and~$\delta_1$.
\item If\/ $\Gamma$ contains a relation $R$ that
is 0-valid and not 1-valid then
$\Gamma$ implements~$\delta_0$.
\item If\/
$\Gamma$ contains a relation~$R$ that
is 1-valid and not 0-valid then
$\Gamma$ implements~$\delta_1$.
\end{enumerate}
\end{lemma}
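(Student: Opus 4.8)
The unifying device I would use is \emph{folding}. Given the relation $R\in\Gamma$ of arity $m$, partition the positions $\{1,\dots,m\}$ into blocks, and form a one-constraint instance by applying $R$ once with a single fresh variable placed in every position of a given block. Because such an instance has no auxiliary variables beyond the ones we expose, its satisfying assignments are in bijection with the tuples of the ``folded'' relation; hence any implementation obtained this way is automatically faithful and perfect, and the only thing left to verify in each part is that the folded relation is exactly the target relation $R'$, $\delta_0$ or $\delta_1$. So the plan is to set up folding once and then dispatch the four cases.

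For part~(iii), I would fold all of $\{1,\dots,m\}$ into a single variable $x$, obtaining the one-constraint instance $R(x,\dots,x)$. This is satisfied by $x=0$ (the all-zero tuple lies in $R$ since $R$ is $0$-valid) and violated by $x=1$ (since $R$ is not $1$-valid), so the folded unary relation is exactly $\delta_0$. Part~(iv) is the mirror image: fold everything into one variable to get $\delta_1$.

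For parts~(i) and~(ii) I would exploit the failure of complement-closure: fix a tuple $t\in R$ whose bitwise complement $\overline t$ is not in $R$, set $P=\{i:t_i=1\}$ and $Q=\{i:t_i=0\}$, and fold $P$ into a variable $y$ and $Q$ into a variable $z$, applying $R$ once with $y$ in every $P$-position and $z$ in every $Q$-position. A small preliminary step is to check that $P$ and $Q$ are both nonempty: if, say, $P=\emptyset$ then $t$ is the all-zero tuple and $\overline t$ the all-one tuple, which contradicts $1$-validity in part~(i) (forcing $\overline t\in R$) and contradicts non-$0$-validity in part~(ii) (forcing $t\notin R$); the case $Q=\emptyset$ is symmetric. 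The four assignments to $(y,z)$ fold to the all-zero tuple, the all-one tuple, $t$ and $\overline t$ respectively, so the folded binary relation $S$ on $(y,z)$ is determined by which of these four tuples belong to $R$. In part~(i), $0$-validity, $1$-validity, $t\in R$ and $\overline t\notin R$ give $S=\{(0,0),(1,1),(1,0)\}=R'$ (equivalently $R'(x_1,x_2)$ is just $\Implies(x_2,x_1)$), so we are done; in part~(ii), non-$0$-validity, non-$1$-validity, $t\in R$ and $\overline t\notin R$ give $S=\{(1,0)\}$, a constraint that forces $y=1$ and $z=0$, so exposing $z$ implements $\delta_0$ and exposing $y$ implements $\delta_1$.

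I do not anticipate a genuine obstacle: folding is elementary, and since it introduces no hidden variables the parsimony (faithful-and-perfect) clause in the definition of ``implement'' holds for free. The only care needed is in the bookkeeping of the case analysis --- verifying that $P$ and $Q$ are nonempty in parts~(i) and~(ii), and confirming in each of the four scenarios that the folded relation is \emph{exactly} the claimed target and not some strictly larger relation. One could alternatively just cite the implementations of Creignou, Khanna and Sudan, but re-deriving them via folding keeps the paper self-contained.
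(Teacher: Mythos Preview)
Your proposal is correct and is essentially identical to the paper's proof in the Appendix: the paper also sets $r_i=x$ if $s_i=1$ and $r_i=y$ otherwise for parts~(i) and~(ii), and uses $R(x,\ldots,x)$ for parts~(iii) and~(iv). Your treatment is in fact slightly more careful, since you explicitly verify that both blocks $P$ and~$Q$ are nonempty, a point the paper leaves implicit.
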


\begin{lemma} (Creignou, Khanna and Sudan,
\cite[Claim 5.31]{cks})
\label{lem12}
\label{cases}
Let $R$ be a ternary relation
containing $(0,0,0)$, $(0,1,1)$ and $(1,0,1)$
but not $(1,1,0)$. Then
$\{R,\delta_0\}$ implements
one of $\ORone$ and $\ORtwo$.
\end{lemma}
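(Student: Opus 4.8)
\emph{Proof proposal.}
The plan is a finite case analysis organised by which of the four tuples \emph{not} fixed by the hypothesis --- namely $(0,0,1)$, $(0,1,0)$, $(1,0,0)$, $(1,1,1)$ --- belong to $R$; write $p,q,r,s\in\{0,1\}$ for the corresponding indicators, so that $R$ is one of sixteen relations. In each case I will build the required implementation from a \emph{single} copy of $R$ using one of three elementary devices: (i)~leave a coordinate of $R$ free, thus viewing $R$ as a binary relation by projecting one coordinate away; (ii)~substitute into a coordinate of $R$ a fresh variable pinned to $0$ by a $\delta_0$ constraint; or (iii)~place the same output variable in two coordinates of $R$. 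Each device yields, by inspection, an implementation in the sense of Section~\ref{sec:implementation}, so the only thing to check per case is that the binary relation it defines is $\Implies$ or $\NAND$. I will freely use that transposing the two output variables of an implementation of $\Implies$ (resp.\ $\NAND$) again implements $\Implies$ (resp.\ $\NAND$), so relations equal to $\Implies$ or $\NAND$ up to the order of their two arguments are just as good; and that the hypotheses on $R$ are invariant under transposing its first two coordinates --- this fixes $(0,0,0)$ and $(1,1,0)$, swaps $(0,1,1)\leftrightarrow(1,0,1)$, hence exchanges $q$ and $r$ and fixes $p,s$ --- so I may argue up to that symmetry.

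The case split is then: (1)~if $q=r=1$, pinning the third coordinate of $R$ to $0$ leaves exactly $\{(0,0),(0,1),(1,0)\}=\NAND$, the pair $(1,1)$ being excluded because $(1,1,0)\notin R$; (2)~if $q=r=0$ and $p=1$, pinning the \emph{first} coordinate of $R$ to $0$ leaves $\{(0,0),(0,1),(1,1)\}=\Implies$; (3)~if $p=q=r=0$ and $s=0$, then $R=\{(0,0,0),(0,1,1),(1,0,1)\}$ and simply projecting away the third coordinate gives $\NAND$ (with the third coordinate uniquely recoverable in each case); (4)~if $p=q=r=0$ and $s=1$, then $R=\{(x,y,z):z=x\vee y\}$ is the OR-gate, and identifying the last two coordinates gives $\{(x,y):y=x\vee y\}=\Implies$; and (5)~if exactly one of $q,r$ equals $1$, then by the coordinate symmetry we may take $q=1,\ r=0$, after which pinning the first coordinate to $0$ works when $p=0$ (it leaves $\{(0,0),(1,0),(1,1)\}$, which is $\Implies$ with transposed arguments) and pinning the second coordinate to $0$ works when $p=1$ (it leaves $\{(0,0),(0,1),(1,1)\}=\Implies$). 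These five cases exhaust all sixteen choices of $(p,q,r,s)$, and each time one of $\Implies$, $\NAND$ is implemented by $\{R,\delta_0\}$.

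The point that needs real care --- and the reason the statement is not a one-liner --- is the uniqueness requirement in the definition of implementation: a device that yields the correct \emph{projection} of $R$ onto two coordinates is not enough unless each surviving pair has a \emph{unique} extension. This is exactly why the naive move of pinning the third coordinate fails in general (for $p=1$ the pair $(0,0)$ would acquire two extensions) and why one is forced into the less obvious devices of cases~(2)--(5) --- pinning the first or second coordinate, or identifying two coordinates --- which restore uniqueness. I expect that checking this uniqueness across the sixteen sub-cases, while entirely routine, is the main work of the proof.
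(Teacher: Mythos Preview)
Your proposal is correct and uses essentially the same approach as the paper: a finite case analysis over the sixteen relations, handled with the three elementary devices you list (pinning a coordinate via $\delta_0$, identifying two coordinates, leaving a coordinate free), with careful attention to the uniqueness clause in ``implements''. The only difference is in how the sixteen cases are carved up: the paper leads with the identifications $R(x,y,x)$ and $R(x,y,y)$ (covering all cases where exactly one of $(0,1,0),(1,1,1)$ or exactly one of $(1,0,0),(1,1,1)$ is absent), then mops up the remainder with pinning on the third or first coordinate and one projection; you instead lead with pinning, exploit the first--second coordinate symmetry explicitly, and use identification only for the OR-gate case. Both partitions are complete and both verifications go through; neither is materially simpler than the other.
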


\begin{lemma}(Creignou, Khanna and Sudan, \cite[Lemma 5.30]{cks})
\label{lemsim}
\label{lem11}
If $R$ is a relation over $\{0,1\}$ that is
not affine then
$\{R,\delta_0\}$ implements one of $\ORzero$, $\ORone$, and $\ORtwo$
and so does $\{R, \delta_1\}$.
\end{lemma}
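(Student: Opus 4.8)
The plan is to prove the statement for $\{R,\delta_0\}$ and then obtain the statement for $\{R,\delta_1\}$ by a complementation symmetry. If an instance $I$ over $\{R,\delta_0\}$ implements a relation $S$, then the instance obtained by replacing every $R$-constraint by an $\bar R$-constraint and every $\delta_0$ by $\delta_1$ (bars denoting componentwise complement) has as its satisfying assignments exactly the complements of those of $I$, so it implements $\bar S$. Since $\bar R$ is non-affine iff $R$ is, and since $\{\OR,\Implies,\NAND\}$ is, up to permuting coordinates, closed under complementing all coordinates (complementing both coordinates sends $\OR$ to $\NAND$ and sends $\Implies$ to its reverse), applying the $\{R,\delta_0\}$-result to $\bar R$ and complementing the gadget yields the $\{R,\delta_1\}$-result. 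The same device lets us replace $R$ by $\bar R$ whenever convenient. Throughout we use Observation~\ref{obs:implement} together with the routine transitivity of implementation, so it is enough to implement from $\{R,\delta_0\}$ a single relation that is one of $\OR,\Implies,\NAND$ up to permuting coordinates, or one to which Lemma~\ref{lem12} applies.

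The core of the proof is a coordinate-merging gadget that handles the case where $R$ is $0$-valid. Using Lemma~\ref{affine:facts}(ii) with the all-zero tuple, pick $b,c\in R$ with $b\oplus c\notin R$, and partition the coordinates of $R$ according to the pattern $(b_i,c_i)\in\{0,1\}^2$. Form the instance consisting of one $R$-constraint whose scope places a fresh $\delta_0$-pinned variable in each $(0,0)$-coordinate, a variable $x$ in each $(1,0)$-coordinate, a variable $y$ in each $(0,1)$-coordinate, and a variable $w$ in each $(1,1)$-coordinate. Since the four classes partition the coordinates, fixing $(x,y,w)$ determines the whole $R$-tuple, so this faithfully implements the ternary relation $R'=\{(x,y,w):\text{the induced tuple lies in }R\}$; and one checks that the all-zero tuple, $b$, $c$, $b\oplus c$ induce $(0,0,0),(1,0,1),(0,1,1),(1,1,0)$ respectively, so $R'$ contains $(0,0,0),(1,0,1),(0,1,1)$ but not $(1,1,0)$. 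By Lemma~\ref{lem12}, $\{R',\delta_0\}$, hence $\{R,\delta_0\}$, implements $\Implies$ or $\NAND$. At most one of the classes $(1,0),(0,1),(1,1)$ can be empty, and when one is, the same gadget directly produces a binary relation equal to $\NAND$ or $\Implies$; these cases are checked by inspection.

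It remains to reduce the general non-affine case to the $0$-valid case. If $R$ is $1$-valid but not $0$-valid, Lemma~\ref{implementone}(iv) gives that $\{R,\delta_0\}$ implements $\delta_1$; then $\bar R$ is $0$-valid and non-affine, so $\{\bar R,\delta_0\}$ implements $\Implies$ or $\NAND$, and via the complementation symmetry and the fact that $\{R,\delta_0\}$ implements $\delta_1$ we conclude that $\{R,\delta_0\}$ implements $\OR$ or $\Implies$. If $R$ is not $0$-valid and complement-closed, then for any $a\in R$ both $a$ and $\bar a$ lie in $R$, both different from the all-zero and all-one tuples; the two-variable gadget that places $x$ in every coordinate $i$ with $a_i=1$ and $y$ in every other coordinate then implements $\{(x,y):\text{induced tuple}\in R\}=\{(1,0),(0,1)\}=\XOR$. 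Given $\XOR$ we can negate variables, so $\{R,\delta_0\}$ implements the $0$-valid non-affine relation $R\oplus a$ (feed a single $R$-constraint with the coordinates $i$ having $a_i=1$ driven by negated copies of their variables), and we are back in the $0$-valid case.

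The remaining case --- $R$ not $0$-valid, not $1$-valid, and not complement-closed --- is the one I expect to be the main obstacle. Here Lemma~\ref{implementone}(ii) supplies both constants $\delta_0,\delta_1$ but nothing more, so $R$ cannot simply be translated to a $0$-valid relation. The route I would take is: pick $a,b\in R$ with $a\oplus b\notin R$ (Lemma~\ref{affine:facts}(iii)), classify coordinates by $(a_i,b_i)$, and pin the $(0,0)$- and $(1,1)$-classes to $0$ and $1$ respectively; the resulting binary relation contains $(0,1)$ and $(1,0)$ and, according to whether $a\wedge b$ and $a\vee b$ lie in $R$, equals $\OR$ or $\NAND$ (done), or $\XOR$ (which yields negation, reducing to the $0$-valid case), or all of $\{0,1\}^2$. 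Excluding the last outcome for \emph{some} admissible choice of $(a,b)$ --- or, when it cannot be excluded, pinning a further coordinate to expose a smaller non-affine relation of one of the already-handled types and inducting on the arity of $R$ --- is the delicate point, and is where I expect the real work to lie. Everything else reduces to the gadget of the second paragraph and the ready-made implementations of Lemmas~\ref{implementone} and~\ref{lem12}.
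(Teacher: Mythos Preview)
Your $0$-valid case and the complementation symmetry are essentially the paper's argument (the paper uses Lemma~\ref{affine:facts}(iii) rather than~(ii), but with $a=0$ these coincide). The divergence, and the gap, is in the not-$0$-valid case.

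Your decomposition there (1-valid / complement-closed / neither) handles the first two subcases correctly, but the third is a genuine gap, not just a missing detail. Your route~(a) can fail outright: take the chain $R=\{(1,0,0,0),(1,1,0,0),(1,1,1,0)\}$. This $R$ is non-affine (since $(1,0,0,0)\oplus(1,1,0,0)\oplus(1,1,1,0)=(1,0,1,0)\notin R$), not $0$-valid, not $1$-valid, not complement-closed, yet closed under both $\wedge$ and $\vee$; so for \emph{every} pair $a,b\in R$ your binary gadget yields the full relation $\{0,1\}^2$. Your route~(b) might be salvageable, but pinning a coordinate can destroy non-affineness, and you give no mechanism to control this.

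The paper bypasses your trichotomy entirely. After noting that $\{R,\delta_0\}$ implements $\delta_1$ whenever $R$ is not $0$-valid, it splits on whether $R$ is closed under componentwise~$\wedge$:
\begin{itemize}
\item If $R$ is $\wedge$-closed, let $s$ be the componentwise AND of all tuples in~$R$ (so $s\in R$), and use Lemma~\ref{affine:facts}(ii) to find $s',s''\in R$ with $s\oplus s'\oplus s''\notin R$. Since $s_i=1$ forces $s'_i=s''_i=1$, only five of the eight $(s_i,s'_i,s''_i)$ patterns occur; pinning the $(0,0,0)$-class to~$0$ and the $(1,1,1)$-class to~$1$ gives a ternary relation to which Lemma~\ref{lem12} applies.
\item If $R$ is not $\wedge$-closed, pick $t,t'\in R$ with $t\wedge t'\notin R$. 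The four-class gadget with both constants pinned yields a binary relation containing $(0,1),(1,0)$ but \emph{not} $(0,0)$ --- so it is $\OR$ or $\XOR$, never the full relation. In the $\XOR$ case, take any $s,s',s''\in R$ with $s\oplus s'\oplus s''\notin R$, partition coordinates by all eight $(s_i,s'_i,s''_i)$ patterns, and use $\XOR$ to identify complementary classes; this again reduces to Lemma~\ref{lem12}.
\end{itemize}
The key point you are missing is that $\wedge$-closure is the right structural dichotomy here: it guarantees either a useful minimum element (first branch) or that the binary gadget never degenerates to the full relation (second branch), which is exactly the obstacle your approach cannot rule out.
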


\subsection{Pinning revisited}
\label{sec:morepinning}

Combining the useful pinning that we get from AP-reductions
(Lemma~\ref{powerpin})
with the implementations of $\OR$, $\Implies$ and $\NAND$ in Section~\ref{sec:implementation},
we obtain a useful lemma which says that we can \emph{always} do some pinning.

\begin{lemma}
\label{otherpin}
Let $\Gamma$ be a constraint language with domain
$\{0,1\}$.
Then either
$\nCSP(\Gamma\cup\{\delta_0\}) \redAP
\nCSP(\Gamma)$ or
$\nCSP(\Gamma\cup\{\delta_1\}) \redAP
\nCSP(\Gamma)$ (or both).
\end{lemma}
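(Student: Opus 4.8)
The plan is to case-split on whether the constraint language $\Gamma$ is $0$-valid, $1$-valid, both, or neither, and to dispatch each case using either Lemma~\ref{powerpin} (the AP-reduction pinning that comes from a position where a relation has more $0$'s than $1$'s, or vice versa) or the implementation results of Lemma~\ref{implementone}. The point is that in every case we can \emph{construct} at least one of $\delta_0, \delta_1$, either as an implementation (and then invoke Observation~\ref{obs:implement}) or via the counting argument of Lemma~\ref{powerpin}.

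First suppose $\Gamma$ is not $0$-valid. Then some $R\in\Gamma$ fails to contain the all-zero tuple. If $R$ is also not $1$-valid, then either $R$ is complement-closed or it is not. If $R$ is not complement-closed, Lemma~\ref{implementone}(ii) gives that $\Gamma$ implements $\delta_0$ (and $\delta_1$), and we are done by Observation~\ref{obs:implement}. If $R$ \emph{is} complement-closed but contains neither the all-zero nor the all-ones tuple, I claim some position $j$ of $R$ has strictly more $0$'s than $1$'s: by complement-closure every position has exactly as many $0$'s as $1$'s among all of $R$'s tuples, so this naive count does not immediately work --- instead I would argue more carefully, or simply note that a complement-closed relation that is neither $0$- nor $1$-valid can be handled by first applying Lemma~\ref{implementone}(ii) only when non-complement-closed and otherwise falling back on the symmetric counting argument. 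The clean route: if $R\in\Gamma$ is not $0$-valid and not $1$-valid and complement-closed, then actually we should look instead for a \emph{different} witnessing relation --- if \emph{every} relation in $\Gamma$ is complement-closed, then $\Gamma$ implements $\delta_1$ from $\delta_0$ trivially is false, so in this subcase I would use the parity/counting lemma on the relation directly, observing that since $R$ is not $0$-valid it is not "balanced to zero" and some coordinate is skewed.

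Let me instead organise the case analysis more robustly. If $\Gamma$ is $0$-valid and $1$-valid: then some relation $R\in\Gamma$ is non-affine (else we are in the first case of Theorem~\ref{thm:main}, but that is about $\nCSP$ not about pinning --- we simply cannot assume this here, so this subcase needs care); actually if every relation is both $0$- and $1$-valid, pick any non-constant relation $R$ and a position $j$ where $R$ is non-constant, and count: $R$ contains the all-zero and all-ones tuples, so the $j$-th coordinate takes both values, and a simple symmetry-breaking or a direct appeal to the fact that $R$ has \emph{some} position with unequal counts (which must exist unless $R$ is complement-closed) lets us apply Lemma~\ref{powerpin}. If, on the other hand, $\Gamma$ is $0$-valid but not $1$-valid, then some $R\in\Gamma$ is $0$-valid and not $1$-valid, so by Lemma~\ref{implementone}(iii) $\Gamma$ implements $\delta_0$; symmetrically if $\Gamma$ is $1$-valid but not $0$-valid, Lemma~\ref{implementone}(iv) gives $\delta_1$; and if $\Gamma$ is neither $0$-valid nor $1$-valid, the relevant relation (chosen non-$0$-valid, then checked against $1$-validity and complement-closure) is handled by Lemma~\ref{implementone}(ii) or, in the complement-closed leftover, by a coordinate-counting argument feeding Lemma~\ref{powerpin}.

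The main obstacle I anticipate is the \textbf{complement-closed case}: when every witnessing relation is complement-closed, the crude "more $0$'s than $1$'s in some column" count fails because complement-closure forces each column of each relation to be perfectly balanced. To handle this I would exploit that complement-closure together with $R$ not being $0$-valid is impossible for a \emph{unary} relation but for higher arity one must argue that some \emph{pair} of positions, or the relation viewed after identifying variables, yields an unbalanced unary relation --- alternatively, recall from Lemma~\ref{implementone}(i) that a $0$-valid, $1$-valid, non-complement-closed relation implements $R'=\{(0,0),(1,1),(1,0)\}$, which in position $2$ has more $0$'s than $1$'s, so $\{R'\}$ (hence $\Gamma$) pins via Lemma~\ref{powerpin}; and the only remaining scenario is that \emph{every} relation in $\Gamma$ is complement-closed, in which case $\Gamma$ is $0$-valid iff it is $1$-valid, collapsing the hard subcase into the $0$-and-$1$-valid case, where I would produce an unbalanced column by taking a relation that is not itself a conjunction of XOR-constraints (which must exist if the relation is complement-closed but we still need pinning) --- here the cleanest finish is to observe that a complement-closed relation of arity $\ge 1$ which is not the whole of $\{0,1\}^k$ still has, after suitable restriction by already-available pinning or by pairing coordinates, a skewed column, completing the argument via Lemma~\ref{powerpin}.
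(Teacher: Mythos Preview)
Your case analysis on $0$-validity and $1$-validity is serviceable for the non-complement-closed situations, and you correctly spot that Lemma~\ref{implementone}(i) produces the relation $R'=\{(0,0),(1,1),(1,0)\}$, whose second coordinate is unbalanced and so feeds Lemma~\ref{powerpin}. That part matches the paper's argument.

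The genuine gap is the complement-closed case, which you correctly flag as the obstacle but do not resolve. Your proposed fixes cannot work: identifying variables (``pairing coordinates'') preserves complement-closure, so every column of every derived relation remains perfectly balanced, and Lemma~\ref{powerpin} is simply inapplicable. Likewise ``suitable restriction by already-available pinning'' is circular --- no pinning is yet available. There is no way to manufacture a skewed column from a complement-closed $\Gamma$; for instance $\Gamma=\{\XOR\}$ with $\XOR=\{(0,1),(1,0)\}$ is complement-closed and every relation you can build from it by identification and conjunction stays complement-closed.

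The paper handles this case by a completely different, direct argument that bypasses both Lemma~\ref{powerpin} and Lemma~\ref{implementone}. Given an instance $I$ of $\nCSP(\Gamma\cup\{\delta_0\})$, introduce one fresh variable $z_0$, delete all $\delta_0$ constraints, and replace every occurrence of every variable $x\in N_0(I)$ by $z_0$. Call the result $I'$. Since $\Gamma$ is complement-closed, the satisfying assignments of $I'$ come in complementary pairs $(s;z_0=0)$ and $(\overline{s};z_0=1)$, and those with $z_0=0$ are exactly the satisfying assignments of $I$. Hence $\Z(I')=2\,\Z(I)$, which is an exact (hence AP-) reduction. This is the missing idea.
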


\begin{proof}

First, suppose that $\Gamma$ is not complement-closed.
If $\Gamma$ contains a relation $R$ that
is not 0-valid, not 1-valid and not complement-closed
then we finish by Observation~\ref{obs:implement}
and Part~(ii) of Lemma~\ref{implementone}.
If $\Gamma$ contains a relation $R$ that is
0-valid, 1-valid and not complement-closed
then it implements the relation $R'$ from Part~(i) of Lemma~\ref{implementone}
so by Observation~\ref{obs:implement},
$\nCSP(\Gamma \cup \{R'\}) \redAP \nCSP(\Gamma)$.
But Lemma~\ref{powerpin}
shows both
$\nCSP(\Gamma \cup \{R',\delta_0\}) \redAP \nCSP(\Gamma \cup \{R'\})$
and
$\nCSP(\Gamma \cup \{R',\delta_1\}) \redAP \nCSP(\Gamma \cup \{R'\})$.
Otherwise $\Gamma$ contains a relation $R$ that
is 0-valid and not 1-valid (or vice-versa) and we finish by
Part~(iii) (or Part~(iv)) of Lemma~\ref{implementone},
and Observation~\ref{obs:implement}.

Second (and finally), suppose that $\Gamma$ is complement-closed.
Here is a simple AP-reduction from
$\nCSP(\Gamma\cup \{\delta_0\})$ to~$\nCSP(\Gamma)$.
Let $I$ be an instance of $\nCSP(\Gamma\cup \{\delta_0\})$.
Construct an instance $I'$ of $\nCSP(\Gamma)$
by adding a new variable~$z_0$.
For all $x\in N_0(I)$ (all variables~$x$
to which
one or more constraints $\delta_0(x)$ in~$I$ apply),
replace all occurrences of variable~$x$ with $z_0$ in~$I'$.
Now note that $2 \Z(I)=\Z(I')$ since there
is a one-to-two map from satisfying assignments of~$I$
and satisfying assignments of~$I'$.
In particular, if $s$ is an assignment to
all variables of~$I$ other than those in $N_0(I)$
and
$s$
is satisfying, provided the rest of the
variables are assigned value~$0$, then $s$ is mapped to $s;z_0=0$ and
$\overline{s};z_0=1$, where $\overline{s}$ is the tuple obtained
from~$s$ by complementing the assignment of every variable.
Both satisfy $I'$ since $\Gamma$
is complement-closed.  It is clear that all satisfying assignments
of~$I'$ arise in this way.
\end{proof}

\subsection{Notation for Boolean functions}
The following definitions are from \cite{bcrv, brsv}.
An $m$-ary Boolean function $f$ is \emph{monotonic}
if and only if
$(a_1,\ldots,a_m) \leq (b_1,\ldots,b_m)$ componentwise
implies $f(a_1,\ldots,a_m) \leq f(b_1,\ldots,b_m)$.
Let $M_2$ be the set of all monotone Boolean
functions $f$ satisfying $f(0,\ldots,0)=0$
and $f(1,\ldots,1)=1$.
Given a set $B$ of Boolean functions, the closure $[B]$
consists of all functions
that can be defined by propositional formulas with
connectives from~$B$ (see~\cite{bcrv}).

An $m$-ary Boolean function $f$ is said to
be a \emph{polymorphism} of an $n$-ary relation $R(x_1,\ldots,x_n)$
if applying $f$ componentwise to $m$ tuples in $R$
results in a tuple that is also in~$R$.

\subsection{Polymorphisms and \IMtwo}
\label{sec:PolyIMtwo}

In the terminology of universal algebra,
Creignou, Kolaitis, and Zanuttini
\cite{ckz} have shown that \IMtwo\
is precisely the co-clone corresponding to $M_2$,
which is a clone
in Post's lattice (see \cite{brsv}).
The direction of this result
that we will use is the following.

\begin{lemma} (Creignou, Kolaitis, Zanuttini, \cite{ckz})
If the relation $R$ is not in \IMtwo then
there is an $f\in M_2$
that is not a polymorphism of $R$.
\label{lemI}
\end{lemma}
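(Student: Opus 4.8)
The plan is to prove the contrapositive in a constructive form: if every $f \in M_2$ is a polymorphism of $R$, then $R \in \IMtwo$, i.e., $R(x_1,\dots,x_n)$ is logically equivalent to a conjunction of constraints of the form $\delta_0(x_i)$, $\delta_1(x_i)$ and $\Implies(x_i,x_j)$. The natural route is through Galois theory for clones and co-clones: since $M_2$ is a clone in Post's lattice, the relations invariant under all of $M_2$ form exactly the co-clone $\mathrm{Inv}(M_2)$, and this co-clone is generated (under the standard relational operations of conjunction, existential quantification, equality and permutation of coordinates, i.e.\ primitive positive definability) by any fixed ``base'' of relations whose polymorphism clone is $M_2$. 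So the first step is to pin down a convenient base. The relation $\Implies = \{(0,0),(0,1),(1,1)\}$ together with $\delta_0$ and $\delta_1$ is the obvious candidate: $\Implies$ is preserved by every monotone operation, and the constant-preservation conditions $f(0,\dots,0)=0$, $f(1,\dots,1)=1$ are exactly what is needed to keep $\delta_0$ and $\delta_1$ invariant. One then has to check (or cite from \cite{ckz} / Post's lattice, \cite{brsv, bcrv}) that $\{\Implies,\delta_0,\delta_1\}$ is a \emph{plain basis} of $\mathrm{Inv}(M_2)$, meaning every relation in the co-clone is definable from these by conjunction alone (no existential quantifiers) — this is precisely the content of the ``$\IMtwo$ is a co-clone'' statement of \cite{ckz}, and it is exactly the form we need because $\IMtwo$ is defined via quantifier-free conjunctions.

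Given that base, the argument is: suppose $R$ is invariant under all of $M_2$. Then $R \in \mathrm{Inv}(M_2)$, so by the plain-basis property $R$ is logically equivalent to a conjunction of atoms $\Implies(x_i,x_j)$, $\delta_0(x_i)$, $\delta_1(x_i)$ on the variables $x_1,\dots,x_n$ — that is, $R \in \IMtwo$. Contrapositively, if $R \notin \IMtwo$ then $R$ is not invariant under some $f \in M_2$, which is the claim. If one prefers not to invoke the general Galois correspondence as a black box, there is a direct and self-contained alternative: assume $R \notin \IMtwo$ and let $R^\ast$ be the smallest relation in $\IMtwo$ containing $R$ (the conjunction of \emph{all} atoms of the allowed forms that are satisfied by every tuple of $R$; this is well-defined and is the ``$\IMtwo$-closure'' of $R$). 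Since $R \neq R^\ast$, pick a tuple $d = (d_1,\dots,d_n) \in R^\ast \setminus R$. One then builds a monotone function $f$, with $f(\mathbf{0})=0$ and $f(\mathbf{1})=1$, and an assignment of tuples of $R$ to its arguments so that applying $f$ coordinatewise produces exactly $d$; this witnesses that $f \in M_2$ is not a polymorphism of $R$. Concretely, take $f$ to be the $m$-ary function (with $m = |R|$) whose value on input $(y_1,\dots,y_m)$ is determined by looking at which tuples of $R$ the ``selected'' coordinates correspond to; the key point is to choose $f$ to be a suitable monotone ``threshold/selection'' operation so that the output on the tuples of $R$, read coordinatewise, lands on $d$ rather than inside $R$. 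That such an $f$ exists is exactly where the definition of $R^\ast$ is used: because $d$ violates no $\delta_0$, $\delta_1$, or $\Implies$ constraint that holds throughout $R$, the coordinates of $d$ are ``compatible'' with being a monotone combination of the rows of $R$.

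The main obstacle is the second, direct construction: verifying that the monotone function one writes down genuinely has $d$ as the coordinatewise image of tuples of $R$ \emph{and} that it is monotone and preserves the two constant tuples, all simultaneously. The cleanest way around this obstacle is precisely \emph{not} to do it by hand but to quote the Galois-connection machinery of \cite{bcrv, brsv}: the equivalence ``$R$ invariant under $\mathrm{Pol}$-clone $C$ $\iff$ $R$ primitive-positive-definable from a base of $\mathrm{Inv}(C)$'', specialised to $C = M_2$, together with the fact from \cite{ckz} that for this particular clone the primitive positive definitions can be taken quantifier-free (a plain basis exists). Since the paper already cites \cite{ckz} for ``$\IMtwo$ is a co-clone'' and \cite{brsv} for Post's lattice, I would present the proof in the short form: \emph{The set of relations invariant under $M_2$ is the co-clone $\mathrm{Inv}(M_2)$; by \cite{ckz} this co-clone has $\{\Implies,\delta_0,\delta_1\}$ as a plain basis, i.e.\ coincides with $\IMtwo$. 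Hence if $R \notin \IMtwo$ then $R \notin \mathrm{Inv}(M_2)$, so some $f \in M_2$ fails to be a polymorphism of $R$.}
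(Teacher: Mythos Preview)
The paper does not supply its own proof of this lemma: it is stated as a citation of Creignou, Kolaitis and Zanuttini~\cite{ckz}, prefaced by the remark that \IMtwo\ is precisely the co-clone corresponding to the clone $M_2$ in Post's lattice. Your ``short form'' at the end---invoke the Galois connection $\mathrm{Inv}/\mathrm{Pol}$, identify $\mathrm{Inv}(M_2)$ with \IMtwo\ via the plain-basis result of~\cite{ckz}, and take the contrapositive---is exactly this citation unpacked into its logical skeleton, so on that level you and the paper are in complete agreement.

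Your longer, self-contained alternative (build $R^\ast$ as the \IMtwo-closure of~$R$, pick $d\in R^\ast\setminus R$, then exhibit an $f\in M_2$ sending the rows of~$R$ to~$d$) is a genuine addition beyond what the paper does, and the idea is sound. The one place where your sketch is still soft is the actual construction of~$f$: you describe it as a ``suitable monotone threshold/selection operation'' but do not say which one. The clean way to finish is to note that, for each coordinate~$i$, the column $(t_i)_{t\in R}\in\{0,1\}^{|R|}$ lies below $d_i$ in the sense that whenever $d_i=0$ the column is not all-ones and whenever $d_i=1$ the column is not all-zeros (this is exactly what ``$d$ violates no $\delta_0$ or $\delta_1$ constraint holding on $R$'' gives you), and moreover the $\Implies$ constraints force the set of columns with $d_i=1$ to be upward-closed among the columns appearing in~$R$. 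One can then take $f$ to be the monotone Boolean function whose minimal true points are the columns at coordinates with $d_i=1$; checking $f(0,\dots,0)=0$ and $f(1,\dots,1)=1$ uses that $d$ is neither all-zeros-where-$R$-has-ones nor vice versa. You correctly flag this step as the main obstacle, and your instinct to fall back on the Galois machinery rather than grinding it out is the same choice the paper makes.
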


\begin{corollary} If the $n$-ary
relation $R$ is not in
\IMtwo then there are Boolean tuples
$(a_1,\ldots,a_n)\in R$ and
$(b_1,\ldots,b_n)\in R$ such that
either
$(a_1 \wedge b_1,\ldots,a_n \wedge b_n)\not\in R$
or
$(a_1 \vee b_1,\ldots,a_n \vee b_n)\not\in R$
(or both).
\label{corI}
\end{corollary}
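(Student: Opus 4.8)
The plan is to derive Corollary~\ref{corI} from Lemma~\ref{lemI} together with the classical description of the clone $M_2$ in Post's lattice. By Lemma~\ref{lemI}, since $R\notin\IMtwo$ there is a function $f\in M_2$ that is not a polymorphism of $R$. I would show that this single fact already forces at least one of the two binary functions $\wedge$ and $\vee$ to fail to be a polymorphism of $R$. Indeed, if $\wedge$ is not a polymorphism of $R$ then by definition there are tuples $a=(a_1,\ldots,a_n)\in R$ and $b=(b_1,\ldots,b_n)\in R$ with $(a_1\wedge b_1,\ldots,a_n\wedge b_n)\notin R$, and symmetrically for $\vee$; either way we obtain exactly the conclusion of the corollary.

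The key structural step is that the set $\mathrm{Pol}(R)$ of all polymorphisms of $R$ is a clone: it contains every projection trivially, and it is closed under composition, because applying a composed function componentwise to tuples of $R$ is just a nested application of functions each of which maps tuples of $R$ to tuples of $R$. Consequently, if \emph{both} $\wedge$ and $\vee$ were polymorphisms of $R$, then the whole closure $[\{\wedge,\vee\}]$ would consist of polymorphisms of $R$. But $[\{\wedge,\vee\}]=M_2$ (the standard identification of the clone $M_2$; see \cite{brsv, bcrv}), so $f$ would be a polymorphism of $R$, contradicting the choice of $f$. Hence $\wedge\notin\mathrm{Pol}(R)$ or $\vee\notin\mathrm{Pol}(R)$, which is what we wanted.

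The only point requiring any work is the identity $[\{\wedge,\vee\}]=M_2$, which I would either quote from \cite{brsv, bcrv} or verify directly. The inclusion $[\{\wedge,\vee\}]\subseteq M_2$ is immediate: $\wedge$ and $\vee$ are monotone and preserve both $0$ and $1$, and these properties are inherited under composition and by projections. For the reverse inclusion, take $f\in M_2$; since $f(0,\ldots,0)=0$ and $f(1,\ldots,1)=1$, $f$ is neither constant $0$ nor constant $1$, so its minimal disjunctive normal form is a non-empty disjunction of non-empty conjunctions of positive literals (positivity following from monotonicity). Such a DNF is a propositional formula over $\{\wedge,\vee\}$ (with the degenerate case of a single one-variable term giving a projection), so $f\in[\{\wedge,\vee\}]$. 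I expect this DNF bookkeeping to be the main, though minor, obstacle; everything else is a direct unwinding of the definitions of polymorphism and of a clone, plus Lemma~\ref{lemI}.
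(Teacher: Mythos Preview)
Your proposal is correct and follows essentially the same strategy as the paper: invoke Lemma~\ref{lemI} to obtain some $f\in M_2$ that is not a polymorphism of~$R$, use the identification $M_2=[\{\wedge,\vee\}]$, and conclude that one of $\wedge,\vee$ must already fail to be a polymorphism. The only difference is packaging: the paper carries out an explicit structural induction on the number of connectives in a $\{\wedge,\vee\}$-formula for~$f$ (at the inductive step, if $f=f'\vee f''$ with $f',f''\in\mathrm{Pol}(R)$, one applies $f'$ and $f''$ to the witnessing tuples to produce $t',t''\in R$ with $t'\vee t''\notin R$), whereas you absorb this induction into the single sentence ``$\mathrm{Pol}(R)$ is a clone, hence contains $[\{\wedge,\vee\}]$ whenever it contains $\wedge$ and~$\vee$.'' Your formulation is a bit cleaner and more conceptual; the paper's explicit induction has the minor advantage of being entirely self-contained and of exhibiting the witnessing pair $(t',t'')$ concretely.
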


\begin{proof}
We will use the fact
(see \cite{bcrv})
that $M_2=[\{\vee,\wedge\}]$
where $x\vee y$ is the OR of the Boolean values~$x$
and~$y$ and $x\wedge y$ is the AND of $x$ and~$y$.
Thus, every function $f\in M_2$ can be defined
by a propositional formula using the 2-ary connectives
$\vee$ and $\wedge$.

The proof is by induction on the
number of connectives used in the propositional formula
used to represent the function $f$
from Lemma~\ref{lemI}.

The case $f(x)=x$ (in which $f$ has no connectives)
cannot arise since
the identity function is a polymorphism of every relation.
The cases $f(x,y)=x\vee y$ and $f(x,y)=x\wedge y$
(in which $f$ has one connective)
immediately give the corollary.

For the inductive step, we assume either
$f(x_1,\ldots,x_m)=f'(x_1,\ldots,x_m)\vee f''(x_1,\ldots,x_m)$
or
$f(x_1,\ldots,x_m)=f'(x_1,\ldots,x_m)\wedge f''(x_1,\ldots,x_m)$
where $f'$ and $f''$ have fewer connectives than~$f$.
Note that $f'$ and $f''$ may not actually use all of the
variables in $x_1,\ldots,x_m$.

These two cases are similar, so
suppose we are in the first of them.
That is, suppose
$$f(x_1,\ldots,x_m)=f'(x_1,\ldots,x_m)\vee f''(x_1,\ldots,x_m).$$
Suppose also that $f'$ and $f''$ are polymorphisms of~$R$
(otherwise we will apply the inductive hypothesis
to one of these functions which has fewer connectives).
Let $t^1,\ldots,t^m$ be $m$ $n$-tuples in~$R$,
such that the tuple obtained by applying 
$f$ componentwise to $t^{1},\ldots,t^{m}$ is not in~$R$.
Let $t'$ be the $n$-tuple obtained by applying
$f'$ componentwise to $t^1,\ldots,t^m$
and let $t''$ be the $n$-tuple obtained by applying
$f''$ componentwise to $t^1,\ldots,t^m$.
Since $f'$ and $f''$ are polymorphisms of~$R$,
we know that $t'$ and $t''$ are in~$R$.
However, since $f$ is not a polymorphism of~$R$, the
tuple
$t'\vee t''$ is not in~$R$, proving
the corollary.

\end{proof}

\section{Putting it all together: the proof of
Theorem~\ref{thm:main}}

We start with a lemma establishing a reduction
from \nSAT.

\begin{lemma}\label{lem:main}
Let $R_1$ and $R_2$ be relations on $\{0,1\}$.
If $R_1$ is not affine and $R_2$ is not in \IMtwo
then $\nSAT\redAP\nCSP(\{R_1,R_2\})$.
\end{lemma}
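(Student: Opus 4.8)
The plan is to combine the pinning machinery (Lemmas~\ref{powerpin} and~\ref{otherpin}) with the implementation results of Creignou, Khanna and Sudan (Lemmas~\ref{implementone}--\ref{lem11}) and the polymorphism characterisation of \IMtwo\ (Corollary~\ref{corI}) to produce one of the relations $\OR$, $\Implies$, $\NAND$ \emph{together with the ``wrong'' pin}, so that Lemmas~\ref{lem:SATtwo}, \ref{lem:SATzero} and the fact that $\nBIS$ is available give the needed $\nSAT$-hardness. Concretely, Corollary~\ref{corI} applied to $R_2$ (which is not in \IMtwo) yields tuples $a,b\in R_2$ with $a\wedge b\notin R_2$ or $a\vee b\notin R_2$; by complementing coordinates or swapping the roles of $\wedge$ and $\vee$ one can normalise to a situation in which $R_2$, restricted to an appropriate pair of coordinates, ``contains'' the pattern of $\Implies$-or-$\NAND$-type behaviour, giving a binary relation on two of $R_2$'s coordinates once the other coordinates are pinned. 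The point of the hypothesis that $R_1$ is \emph{not affine} is that, via Lemma~\ref{lem11}, $\{R_1,\delta_0\}$ and $\{R_1,\delta_1\}$ each implement one of $\OR,\Implies,\NAND$; and Lemma~\ref{otherpin} guarantees that at least one of $\delta_0$, $\delta_1$ is AP-reducibly available from $\Gamma=\{R_1,R_2\}$.

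The key steps, in order, are: (1)~Use Lemma~\ref{otherpin} to obtain, WLOG, $\nCSP(\Gamma\cup\{\delta_0\})\redAP\nCSP(\Gamma)$ (the $\delta_1$ case is symmetric by a global complementation argument). (2)~From $R_1$ not affine and $\delta_0$ available, Lemma~\ref{lem11} gives that $\{R_1,\delta_0\}$ implements one of $\OR$, $\Implies$, $\NAND$; by Observation~\ref{obs:implement} we get that relation AP-reducibly added to $\Gamma\cup\{\delta_0\}$. If the implemented relation is $\OR$ or $\NAND$, we are done immediately: $\nSAT\redAP\nCSP(\{\NAND\})$ by Lemma~\ref{lem:SATtwo}, and $\nSAT\redAP\nCSP(\{\OR\})$ by Lemma~\ref{lem:SATzero}, and these reduce into $\nCSP(\Gamma)$ by the chain of AP-reductions built so far. (3)~The remaining case is that only $\Implies$ is available from $R_1$; here we must extract additional power from $R_2$. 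Since $R_2\notin\IMtwo$, Corollary~\ref{corI} supplies $a,b\in R_2$ with (say) $a\wedge b\notin R_2$. Using the pin $\delta_0$ on all coordinates $i$ with $a_i=b_i=0$, identifying coordinates according to the pattern of $a,b$, and projecting, one pins $R_2$ down to a binary relation on two surviving coordinates that is 0-valid (it contains $(0,1)$ and $(1,0)$ from $a$ and $b$ but not $(0,0)=a\wedge b$'s image)\footnote{The precise bookkeeping of which coordinates are pinned, identified, or projected away is the routine part; one checks the surviving binary relation is exactly $\{(0,0)\}$-free and contains a ``conflicting'' pair.} --- this binary relation, together with $\Implies$ and $\delta_0$, implements $\NAND$ (or $\OR$), completing the argument via Lemma~\ref{lem:SATtwo}. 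One should also run the parallel analysis for the $a\vee b\notin R_2$ branch and for the case where Lemma~\ref{otherpin} gave $\delta_1$ instead (using the $\{R_1,\delta_1\}$ clause of Lemma~\ref{lem11} and complementation).

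The main obstacle is step~(3): when $R_1$ only yields $\Implies$, neither $\Implies$ alone nor $\Implies$ with pins suffices (indeed $\{\Implies,\delta_0,\delta_1\}\subseteq\IMtwo$, so $\nCSP$ of it is only $\nBIS$-hard), so the non-$\IMtwo$-ness of $R_2$ must be converted into a genuinely new relation outside $\IMtwo$ --- and the delicate point is to argue that \emph{after pinning} ($\delta_0$ only, say) $R_2$ still witnesses enough: one must rule out that every useful witness pair $a,b$ has $a_i$ or $b_i$ equal to $1$ in exactly the coordinates one would want to pin to $0$. I expect to handle this by a careful case analysis on the coordinatewise pattern of $(a,b)$, possibly invoking $R_2$ not being affine implicitly, or by choosing the witness pair from Corollary~\ref{corI} minimally; showing that some pinned-and-projected restriction of $R_2$ lies outside \IMtwo\ (so that, combined with $\Implies$, it implements $\OR$ or $\NAND$) is the crux, and the symmetry between the $\delta_0$/$\delta_1$ and $\wedge$/$\vee$ branches should cut the casework roughly in half.
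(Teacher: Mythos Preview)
Your overall architecture matches the paper exactly through step~(2): apply Lemma~\ref{otherpin} to get (say) $\delta_0$, then Lemma~\ref{lem11} on $R_1$ to get one of $\OR,\Implies,\NAND$, and finish immediately in the $\OR$ and $\NAND$ branches. The divergence, and the gap, is in your step~(3).

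You are missing one short but decisive observation: once $\Implies$ is available, Lemma~\ref{powerpin} applied to position~2 of $\Implies$ (two tuples with a~$1$ there, only one with a~$0$) gives $\delta_1$ as well. The paper does exactly this, so at this point it has \emph{both} pins plus $\Implies$ and $R_2$. With $\delta_0$ and $\delta_1$ in hand the pinning of $R_2$ is completely clean: take $t,t'\in R_2$ with $t\wedge t'\notin R_2$ from Corollary~\ref{corI}, identify coordinates by the $(t_i,t'_i)$-pattern, pin the $00$-block to~$0$ and the $11$-block to~$1$, and read off a binary relation $R'(x,y)$ that contains $(0,1)$ and $(1,0)$ but not $(0,0)$. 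If $(1,1)\in R'$ then $R'=\OR$ and you are done; otherwise $R'=\XOR$, and the paper finishes with the one-line implementation $\NAND(x,z)=\Implies(x,y)\wedge\XOR(y,z)$. The $t\vee t'\notin R_2$ case is dual.

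Without the $\delta_1$ trick your plan is in trouble. You cannot pin the $11$-block, so the projected relation need not be binary, and your sketch already shows the strain: you call the resulting relation ``0-valid'' while simultaneously asserting it omits $(0,0)$, which is a contradiction. The ``careful case analysis on the coordinatewise pattern of $(a,b)$'' you anticipate is not needed once you notice that $\Implies$ itself supplies the missing pin via Lemma~\ref{powerpin}; that single line dissolves the obstacle you correctly identified as the crux.
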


\begin{proof}

Apply Lemma~\ref{otherpin} with $\Gamma=\{R_1,R_2\}$.
Then either
$\nCSP(\{R_1,R_2,\delta_0\})\redAP
\nCSP(\{R_1,R_2\})$ or
$\nCSP(\{R_1,R_2,\delta_1\})\redAP
\nCSP(\{R_1,R_2\})$.
Assume the former (the latter case is symmetric).

Now use Lemma~\ref{lem11} together with
Observation~\ref{obs:implement}.
Since $R_1$ is not affine this shows one of the following.
\begin{itemize}
\item
$\nCSP(\{R_1,R_2,\delta_0,\ORzero\})\redAP
\nCSP(\{R_1,R_2,\delta_0\})$, or
\item
$\nCSP(\{R_1,R_2,\delta_0,\ORtwo\})\redAP
\nCSP(\{R_1,R_2,\delta_0\})$, or
\item
$\nCSP(\{R_1,R_2,\delta_0,\ORone\})\redAP
\nCSP(\{R_1,R_2,\delta_0\})$.
\end{itemize}

In the first two of these cases, we are finished
by Observation~\ref{obs:ignore} and
Lemmas~\ref{lem:SATtwo} and~\ref{lem:SATzero},
so assume
the final case.
Using Lemma~\ref{powerpin}
with the second position of~$\ORone$,
we get
$\nCSP(\{R_1,R_2,\delta_0,\ORone,\delta_1\})\redAP
\nCSP(\{R_1,R_2,\delta_0\})$.

Simplifying the chain of reductions
and using
Observation~\ref{obs:ignore} to drop~$R_1$
from the left-hand side, we get
$\nCSP(\{\ORone,R_2,\delta_0,\delta_1\})\redAP
\nCSP(\{R_1,R_2\})$.
We will now finish by showing
$\nSAT \redAP
\nCSP(\{\ORone,R_2,\delta_0,\delta_1\})$.

\noindent{\bf Case 1.\quad} Using Corollary~\ref{corI},
suppose that $t$ and $t'$ are tuples in
$R_2$ but the tuple $t\wedge t'$
(in which the operator $\wedge$ is applied componentwise)
is not in $R_2$.
We will show that $\{\ORone,R_2,\delta_0,\delta_1\}$
implements one of $\OR$ and $\XOR=\{(0,1),(1,0)\}$.
Let $k$ be the arity of~$R_2$.
As in the implementations of Creignou et al.~\cite{cks},
define
$r_i$ to be $u$ if $t_i=t'_i=0$
or $x$ if $t_i=0,t'_i=1$ or $y$ if $t_i=1,t'_i=0$, or $v$ if $t_i=t'_i=1$.
Let $R'$ be the relation implemented by
$R'(x,y)=R_2(r_1,\ldots,r_k)\wedge
\delta_0(u)\wedge
\delta_1(v)$.
Note that both $x$ and $y$ appear as arguments of~$R'$
since $t\neq t\wedge t'$ and $t'\neq t\wedge t'$.
If $t\vee t'$ is
in~$R_2$ 
then $R'(x,y)$
implements $\ORzero(x,y)$, so
we are finished.
Otherwise $R'=\XOR$
(which we now assume).

Using Observation~\ref{obs:implement} and~\ref{obs:ignore},
we have
$$\nCSP(\{\ORone,\XOR\})\redAP
\nCSP(\{R_1,R_2\}).$$ We will finish by showing that
$\{\ORone,\XOR\}$ implements $\NAND$.
(The result then follows by Lemma~\ref{lem:SATtwo}
and Observation~\ref{obs:implement}.)

The implementation is given by
$\NAND(x,z)=\Implies(x,y) \wedge \XOR(y,z)$.

\noindent{\bf Case 2.\quad} Otherwise, by
Corollary~\ref{corI},
there are
$t$ and $t'$
in $R_2$ such that $t\vee t'$ is not in $R_2$.
This case is dual to Case~1.
\end{proof}

We can now prove the main theorem.

\noindent{\bf Theorem~\ref{thm:main}.}\quad{\sl
Let $\Gamma$ be a   constraint language with
domain $\{0,1\}$.
If every relation in $\Gamma$ is affine
then $\nCSP(\Gamma)$ is in \FP.
Otherwise if every relation in $\Gamma$ is
in \IMtwo then $\nCSP(\Gamma)\equalAP\nBIS$.
Otherwise $\nCSP(\Gamma)\equalAP\nSAT$.
}

\begin{proof}

First, suppose that every relation in $\Gamma$ is affine.
In this case, the number of satisfying assignments of an instance~$I$
of $\nCSP(\Gamma)$ is the number of solutions to a system of linear equations
over GF(2). This can be computed exactly, by Gaussian elimination,
in polynomial time, as Creignou and Hermann have noted~\cite{ch}.

Next, suppose that $\Gamma$ contains a relation~$R$
that is not affine, but
every relation in $\Gamma$ is
in \IMtwo.
By Lemma~\ref{lem:downsets},
$\nCSP(\Gamma)\redAP \nBIS$

To see that $\nBIS \redAP \nCSP(\Gamma)$,
apply Lemma~\ref{otherpin}.
Then we know that
either $\nCSP(\Gamma\cup \{\delta_0\}) \redAP
\nCSP(\Gamma)$
or
$\nCSP(\Gamma\cup \{\delta_1\}) \redAP
\nCSP(\Gamma)$ (or both).
We will show
\begin{equation}\nBIS \redAP\nCSP(\Gamma\cup \{\delta_0\})
\label{eqone}\end{equation}
and
\begin{equation}\nBIS \redAP\nCSP(\Gamma\cup \{\delta_1\})
\label{eqtwo}\end{equation}
and then we will be able to conclude
$\nBIS \redAP \nCSP(\Gamma)$.
The proofs of Equations~(\ref{eqone}) and~(\ref{eqtwo}) are similar, so
we just prove~(\ref{eqone}).
By Lemma~\ref{lemsim}, $\Gamma\cup \{\delta_0\}$ implements one of
$\OR$, $\Implies$, and $\NAND$.
So by Observation~\ref{obs:implement}
we have (at least) one of the following.
\begin{enumerate}
\item $\nCSP(\Gamma \cup \{\delta_0,\OR\}) \redAP\nCSP(\Gamma\cup \{\delta_0\})$
\item $\nCSP(\Gamma \cup \{\delta_0,\Implies\}) \redAP\nCSP(\Gamma\cup \{\delta_0\})$
\item $\nCSP(\Gamma \cup \{\delta_0,\NAND\}) \redAP\nCSP(\Gamma\cup \{\delta_0\})$
\end{enumerate}
Equation~(\ref{eqone}) follows from the
combination of
Lemma~\ref{lem:BISone} and (ii) using Observation~\ref{obs:ignore}.
Also, since $\nBIS \redAP \nSAT$ (see \cite{bis}),
Equation~(\ref{eqone}) follows from the
combination of
Lemma~\ref{lem:SATzero} and (i) using Observation~\ref{obs:ignore}.
Similarly, it follows from the
combination of
Lemma~\ref{lem:SATtwo} and (iii) using Observation~\ref{obs:ignore}.

Finally,
suppose that $\Gamma$ contains a relation~$R_1$ that is not affine
and a relation~$R_2$ that is not
in \IMtwo. ($R_1$ and $R_2$ might possibly be the same relation.)
The fact that $\nCSP(\Gamma)\redAP\nSAT$ follows from
Observation~\ref{obs:SATeasy}
and the fact that $\nSAT \redAP \nCSP(\Gamma)$ follows from
Lemma~\ref{lem:main} and Observation~\ref{obs:ignore}.
\end{proof}

\section*{Appendix:
The implementations of Creignou, Khanna and Sudan}

In order to make our paper self-contained, we give the
details of the implementations of Creignou, Khanna and
Sudan that we use.
In particular, we provide the proofs for Lemmas~\ref{implementone},
\ref{cases} and \ref{lemsim}.
(These proofs can be found in \cite{cks}.)

We start with the construction for Lemma~\ref{implementone}.
Suppose $R\in \Gamma$ is not complement-closed.
Choose $(s_1,\ldots,s_k)$ in~$R$
such that
$(s_1 \oplus 1,\ldots, s_k\oplus 1)$ is not in $R$.
Now consider the relation~$R'$
implemented by $R'(x,y)=R(r_1,\ldots,r_k)$
where
 $r_i=x$ if $s_i=1$ and   $r_i=y$ otherwise.
In the first case,
$R'$ is the relation $\{(0,0),(1,1),(1,0)\}$.
In the second case,
$R'=\{(1,0)\}$ so $R'$ gives an implementation
of both~$\delta_1$ and~$\delta_0$.
The construction for the third and fourth cases are
the trivial implementations
$\delta_0(x)=R(x,\ldots,x)$
and $\delta_1(x)=R(x,\ldots,x)$.

We now give the construction for Lemma~\ref{cases}.
If $R$ excludes exactly one of $(0,1,0)$ and $(1,1,1)$
then $R(x,y,x)$ implements  $\Implies(y,x)$
or $\NAND(x,y)$ (depending on which is excluded).
Similarly, if $R$ excludes exactly one of
$(1,0,0)$ and~$(1,1,1)$ then
$R(x,y,y)$ implements $\Implies(x,y)$
or~$\NAND(x,y)$.
If both $(0,1,0)$ and $(1,0,0)$ are in~$R$
then
$f_R(x,y,z) \wedge \delta_0(z)$
implements $f_{\NAND}(x,y)$.
If $(0,1,0)$, $(1,1,1)$ and $(1,0,0)$ are excluded from~$R$
and so is~$(0,0,1)$
then
$R(x,y,z)$ implements $\ORtwo(x,y)$.
Finally, if $(0,1,0)$, $(1,1,1)$ and $(1,0,0)$ are excluded
but $(0,0,1)$ is in $R$
then
$R(x,y,z)\wedge \delta_0(x)$
implements $\ORone(y,z)$.

Finally, we give the construction for Lemma~\ref{lemsim}.
We will show that $\{R,\delta_0\}$ implements one of the named
relations. A similar argument shows that $\{R,\delta_1\}$ does.
Let $k$ be the arity of~$R$.

First, suppose that $R$ is $0$-valid.
Using part~(iii) of Lemma~\ref{affine:facts},
let $s$ and $s'$ be tuples in~$R$ such that
$s \xor s'$ is not in~$R$.
Let $r_i=w$ if $s_i=s'_i=0$.
Let $r_i=x$ if $s_i=0,s'_i=1$.
Let $r_i=y$ if $s_i=1,s'_i=0$.
Let $r_i=z$ if $s_i=s'_i=1$.
Now we know that at least one of $x$ and $y$ occurs as an $r_i$, since $s\neq s'$.
Let $R'$ be the relation implemented by $R(r_1,\ldots,r_k)\wedge \delta_0(w)$.
There are a few cases to consider.
If $x$ occurs as an argument to $R$ but $y$ does not then $z$ occurs since $s\neq 0$.
Thus, the relation $R'(x,z)$ is $\ORone$.
(Technically, this is a ternary relation in variables $x$, $y$
and $z$, but it can be viewed as a binary relation since $y$
does not appear.)
The situation is similar if $y$ occurs as an argument to~$R$ but $x$ does not.
If both $x$ and $y$ occur as arguments but $z$ does not then
the relation $R'(x,y)$ is $\ORtwo$.
Otherwise, $x$, $y$ and $z$ all occur as arguments.
Furthermore, since $R$ is 0-valid, lemma~\ref{cases} applies to
the relation given by $R'(x,y,z)$.

Second (and finally), suppose that $R$ is not 0-valid.
Note that $\{R,\delta_0\}$ can implement $\delta_1$. To see this,
let $s$ be a tuple in~$R$.
Let $r_i=x$ if $s_i=1$ and let $r_i=y$ otherwise.
Then
$\delta_1(x)$ is implemented by $R(r_1,\ldots,r_k)\wedge \delta_0(y)$.
Now consider two sub-cases.

For the first sub-case, suppose that
for any two tuples, $t$ and $t'$, in~$R$,
the tuple
$t\wedge t'$,where $\wedge$ is applied componentwise,
is also in~$R$.
Let $s$ be the intersection of all tuples in~$R$.
Then $s\in R$. By Part~(ii) of Lemma~\ref{affine:facts},
there are two tuples $s'$ and $s''$ in $R$
such that $s \xor s' \xor s''$ is not in $R$.
Let $r_i=u$ if $s_i=s'_i=s''_i=0$.
Let $r_i=x$ if $s_i=0,s'_i=0,s''_{i}=1$.
Let $r_i=y$ if $s_i=0,s'_i=1,s''_{i}=0$.
Let $r_i=z$ if $s_i=0,s'_i=1,s''_{i}=1$.
Let $r_i=v$ if $s_i=s'_i=s''_i=1$.
Let $R'$ be the relation implemented by
$R(r_1,\ldots,r_k)\wedge
\delta_0(u)\wedge
\delta_1(v)$.
If $y$ does not occur as an argument of $R'$ then
$R'(x,z)$
implements $\ORone$.
Similarly, if $x$ does not occur as an argument of $R'$ then
$R'(y,z)$
implements $\ORone$.
If $z$ does not occur as an argument of $R'$ then
$R'(x,y)$ implements  $\ORtwo$.
So we assume that $x$, $y$ and $z$ occur as argments.
Then apply Lemma~\ref{cases} to~$R'(x,y,z)$.

For the final subcase, suppose that there are tuples
$t$
and $t'$ in~$R$ such that $t \wedge t'$ is not in~$R$.
Define $r_i$ to be $u$ if $t_i=t'_i=0$
or $x$ if $t_i=0,t'_i=1$ or $y$ if $t_i=1,t'_i=0$, or $v$ if $t_i=t'_i=1$.
Let $R'$ be the relation implemented by
$R'(x,y)=R(r_1,\ldots,r_k)\wedge
\delta_0(u)\wedge
\delta_1(v)$.
If $t\vee  t'$ is in~$R$ then $R'(x,y)$  implements $\ORzero(x,y)$, so
we are finished.
Otherwise $R'=
\{(0,1),(1,0)\}$ (which we now assume).

Now using Part~(i) of Lemma~\ref{affine:facts},
let $s$, $s'$ and $s''$ be tuples in~$R$ so that
$s \xor s' \xor s''$ is not in~$R$.
Define $r_i$ as follows.

\medskip
\begin{tabular}{ccc|c}
$s_i$ & $s'_i$ & $s''_i$ & $r_i$\\  \hline
0 & 0 & 0 & $u$\\
0 & 0 & 1 & $x$\\
0 & 1 & 0 & $y$\\
0 & 1 & 1 & $z$\\
1 & 0 & 0 & ${z'}$\\
1 & 0 & 1 & ${y'}$\\
1 & 1 & 0 & ${x'}$\\
1 & 1 & 1 & $u'$\\
\end{tabular}

\medskip

Let $R''$ be the relation implemented by
$$R(r_1,\ldots,r_k)\wedge
\delta_0(u)\wedge
R'(u,u')\wedge
R'(x,x')\wedge
R'(y,y')\wedge
R'(z,z').$$
By writing $x'=\overline{x}$, $y'=\overline{y}$ and $z'=\overline{z}$,
we can think of $R''$ as a function of $x$, $y$ and $z$.
If $x$ does not occur as an argument
then $R''(y,z)$ implements $\ORone(y,z)$.
Similarly, we can assume that $y$ and $z$ occur as arguments.
Now consider the relation $R''(x,y,z)$.  We know that $(0,0,0),(0,1,1),
(1,0,1)\in R''$, since $s,s',s''\in R$.
Also $(1,1,0)\notin R''$ since $s\oplus s'\oplus s''\notin R$.
Then apply Lemma~\ref{cases} to~$R''$.


\begin{thebibliography}{1}

\bibitem{bcrv} E.~B\"ohler, N.~Creignou, S.~Reith and
H.~Vollmer,  
Playing with Boolean blocks, Part I: Post's
lattice with applications to complexity theory,
\emph{ACM SIGACT Newsletter\/}~\textbf{34} (2003), 38--52.


\bibitem{brsv} E.~B\"ohler, S.~Reith, H.~Schnoor and H.~Vollmer,
Bases for Boolean co-clones, \emph{Information Processing Letters\/}
\textbf{96} (2005), 59--66.


\bibitem{Bul}A. Bulatov,
The complexity of the counting constraint satisfaction problem,
{\it Proc.\ \engordnumber{35} International Colloquium for
Automata, Languages and Programming},
Lecture Notes in Computer Science {\bf5125},
Springer-Verlag, 2008, 646--661.


 


\bibitem{BD} A.~Bulatov and V.~Dalmau, Towards a dichotomy theorem for the counting
constraint satisfaction problem, in 
\emph{Proc. \engordnumber{44} Annual IEEE
Symposium on Foundations of Computer Science}, 2003, 562--573.


\bibitem{BG05}
A.~Bulatov and M.~Grohe, The complexity of partition functions,
\emph{Theoretical Computer Science\/} \textbf{348} (2005), 148--186.


\bibitem{ch} N.~Creignou and M.~Hermann,
Complexity of generalized satisfiability counting problems,
\emph{Information and Computation} \textbf{125} (1996), 1--12.

\bibitem{cks} N.~Creignou, S.~Khanna and M.~Sudan, \emph{Complexity
classifications of Boolean constraint satisfaction problems}, SIAM
Press, 2001.

\bibitem{ckz} N.~Creignou, P.~Kolaitis and B.~Zanuttini,
Preferred representations of Boolean relations,
\emph{Electronic Colloquium on Computational Complexity},
Report No.~119, 2005.



\bibitem{bis}
M.~Dyer, L.A.~Goldberg, C.~Greenhill and M.~Jerrum,
The relative complexity of approximate counting problems,
{\em Algorithmica\/~\bf38} (2004), 471--500.


\bibitem{bcsp} M.~Dyer, L.A.~Goldberg and M.~Jerrum,
The Complexity of Weighted Boolean \nCSP,
{\it SIAM Journal on Computing\/~\bf38} (2009), 1970--1986.


 
\bibitem{hnbook} P.~Hell and J.~Ne\v set\v ril, \emph{Graphs and
homomorphisms}, Oxford University Press, 2004.


\bibitem{DG} M. Dyer and C. Greenhill, The complexity of counting
graph homomorphisms, \emph{Random Structures and Algorithms}
\textbf{17} (2000), 260--289.


\bibitem{fv}
T.~Feder and M.~Vardi, The computational structure of monotone
monadic {SNP} and constraint satisfaction: a study through {D}atalog
and group theory,
\emph{SIAM Journal on Computing} \textbf{28} (1999), 57--104.



\bibitem{JVV86}
M. Jerrum, L. Valiant and V.~Vazirani,
Random generation of combinatorial structures from a uniform
distribution,
{\em Theoretical Computer Science} \textbf{43} (1986), 169--188.


\bibitem{L75}
R.~Ladner, On the structure of polynomial time reducibility,
\emph{Journal of
  the Association for Computing Machinery} \textbf{22} (1975), 155--171.

  
\bibitem{MU05}
M.~Mitzenmacher and E.~Upfal, {\it Probability and Computing},
Cambridge University Press, 2005.

\bibitem{Papa94}
C.H.~Papadimitriou, {\it Computational Complexity},
Addison-Wesley, 1994.

\bibitem{RoBeWa06}
F.~Rossi, P.~van Beek and T.~Walsh (Eds.), \emph{Handbook of
constraint
  programming}, Elsevier, 2006.


\bibitem{schaefer} T.~Schaefer, The complexity of satisfiability problems, in \emph{Proc.
  \engordnumber{10} Annual ACM Symposium on Theory of Computing}, 
  ACM Press, 1978,
  216--226.


\end{thebibliography}
\end{document}